\setlist[itemize]{topsep=7pt, listparindent=\parindent}
\newcommand{\ceil}[1]{\left \lceil #1 \right \rceil}
\newcommand{\RMin}{\textsf{RMin}}
\newcommand{\RMax}{\textsf{RMax}}
\newcommand{\PSV}{\textsf{PSV}}
\newcommand{\PLV}{\textsf{PLV}}
\newcommand{\NSV}{\textsf{NSV}}
\newcommand{\NLV}{\textsf{NLV}}
\newcommand{\Min}{\textsf{Min}}
\newcommand{\Max}{\textsf{Max}}
\newcommand{\cMin}{\textsf{cMin}}
\newcommand{\cMax}{\textsf{cMax}}
\newcommand{\rank}{\textsf{rank}}
\newcommand{\select}{\textsf{select}}
\newcommand{\BP}{\textsf{BP}}
\newcommand{\colorr}{\textsf{color}}
\newcommand{\PRS}{\textsf{PRS}}
\newcommand{\NRS}{\textsf{NRS}}
\newtheorem{theorem}{Theorem}
\newtheorem{lemma}{Lemma}%
\newtheorem{corollary}{Corollary}
\newtheorem{remark}{Remark}%
\begin{document}

\title{Space-efficient Data Structure for Next/Previous Larger/Smaller Value Queries \footnote{Preliminary version of these results appeared in the proceedings of the The 15th Latin American Theoretical Informatics Symposium (LATIN 2022)~\cite{DBLP:conf/latin/JoK22}. This work was supported by the National Research Foundation of Korea (NRF) grant funded by the Korea government (MSIT) (No. NRF-2020R1G1A1101477). Seungbum Jo and Geunho Kim contributed equally to this work. We would like to thank to Srinivasa Rao Satti for helpful discussions.}}

\author{
    Seungbum Jo \\
    Chungnam National University, South Korea \\
    sbjo@cnu.ac.kr
    \and
    Geunho Kim \\
    The University of Tokyo, Japan  \\
    research@gnho.kim
}
\date{}

\maketitle
\abstract{Given an array of size $n$ from a total order, we consider the problem of constructing a data structure that supports various queries (range minimum/maximum queries with their variants and next/previous larger/smaller queries) efficiently. In the encoding model (i.e., the queries can be answered without the input array), we propose a $(3.701n + o(n))$-bit data structure, which supports all these queries in $O(\log^{(\ell)}n)$ time, for any positive constant integer $\ell$ (here, $\log^{(1)} n  = \log n$, and for $\ell > 1$, $\log^{(\ell)} n = \log ({\log^{(\ell-1)}} n)$). The space of our data structure matches the current best upper bound of Tsur (Inf. Process. Lett., 2019), which does not support the queries efficiently.
Also, we show that at least $3.16n-\Theta(\log n)$ bits are necessary for answering all the queries. Our result is obtained by generalizing Gawrychowski and Nicholson's $(3n - \Theta(\log n))$-bit lower bound  (ICALP, 15) for answering range minimum and maximum queries on a permutation of size $n$.}



\section{Introduction}
Given an array $A[1, \dots, n]$ of size $n$ from a total order and an interval $[i, j] \subset [1,n]$, suppose there are $k$ distinct positions $i \le p_1 \le p_2 \dots \le p_k \le j$ where $p_1, p_2, \dots, p_{k}$ are the positions of minimum elements in $A[i, \dots, j]$. Then, for $q \ge 1$, \textit{range $q$-th minimum query} on the interval $[i, j]$ ($\RMin{}(i, j, q)$) returns the position $p_q$ (returns $p_k$ if $q > k$), and \textit{range minimum query} on the interval $[i, j]$ ($\RMin{}(i, j)$) returns an arbitrary position among $p_1, p_2, \dots, p_{k}$.
One can also analogously define \textit{range $q$-th maximum query} (resp. \textit{range maximum query}) on the interval $[i, j]$, denoted by $\RMax{}(i, j, q)$ (resp. $\RMax{}(i, j)$).

In addition to the above queries, one can define next/previous larger/smaller queries as follows. When the position $i$ is given, the \textit{previous smaller value query}  on the position $i$ ($\PSV{}(i)$) returns the rightmost position $j < i$, where $A[j]$ is smaller than $A[i]$ (returns $0$ if no such $j$ exists), and the \textit{next smaller value query} on the position $i$ ($\NSV{}(i)$)  returns the leftmost position $j > i$ where $A[j]$ is smaller than $A[i]$ (returns $n+1$ if no such $j$ exists).
The \textit{previous (resp. next) larger value query} on the position $i$, denoted by $\PLV{}(i)$ (resp. $\NLV{}(i)$)) is also defined analogously.

In this paper, we focus on the problem of constructing a data structure that efficiently answers all the above queries. We consider the problem in the \textit{encoding model}~\cite{DBLP:conf/walcom/Raman15}, which does not allow access to the input $A$ for answering the queries after preprossessing.
In the encoding data structure, the lower bound of the space is referred to as the \textit{effective entropy} of the problem. Note that for many problems, their effective entropies have much smaller size compared to the size of the inputs~\cite{DBLP:conf/walcom/Raman15}. Also, an encoding data structure is called \textit{succinct} if its space usage matches the optimal up to lower-order additive terms. The rest of the paper only considers encoding data structures and assumes a $\Theta (\log n)$-bit word RAM model, where $n$ is the input size.

\subsection{Previous Work} 
The problem of constructing an encoding data structure for answering range minimum queries has been well-studied because of its wide applications. It is well-known that any two arrays have a different set of answers of range minimum queries if and only if their corresponding Cartesian trees~\cite{DBLP:journals/cacm/Vuillemin80} are distinct. Thus, the effective entropy of answering range minimum queries on the array $A$ of size $n$ is $2n-\Theta(\log n)$ bits. Sadakane~\cite{DBLP:journals/jda/Sadakane07} proposed the $(4n+o(n))$-bit encoding with $O(1)$ query time using the balanced-parenthesis (BP)~\cite{DBLP:journals/siamcomp/MunroR01} of the Cartesian tree on $A$ with additional nodes. Fisher and Heun~\cite{DBLP:journals/siamcomp/FischerH11} proposed the $(2n+o(n))$-bit data structure (hence, succinct), which supports $O(1)$ query time using the depth-first unary degree sequence (DFUDS)~\cite{DBLP:journals/algorithmica/BenoitDMRRR05} of the \textit{2d-min heap} on $A$. Here, a 2d-min heap of $A$ is an alternative representation of the Cartesian tree on $A$.
By maintaining the encodings of both 2d-min and max heaps on $A$ (2d-max heap can be defined analogously to 2d-min heap), the encoding of \cite{DBLP:journals/siamcomp/FischerH11} directly gives a $(4n+o(n))$-bit encoding for answering both range minimum and maximum queries in $O(1)$ time. Gawrychowski and Nicholson~\cite{DBLP:conf/icalp/GawrychowskiN15} reduced this space to $(3n+o(n))$-bit while supporting the same query time for both queries. They also showed that the effective entropy for answering the range minimum and maximum queries is at least $3n-\Theta(\log n)$ bits.

Next/previous smaller value queries were motivated from the parallel computing~\cite{DBLP:journals/jal/BerkmanSV93}, and have application in constructing compressed suffix trees~\cite{DBLP:conf/spire/OhlebuschFG10}. If all elements in $A$ are distinct, one can answer both the next and previous smaller queries using Fischer and Heun's encoding for answering range minimum queries~\cite{DBLP:journals/siamcomp/FischerH11}. For the general case, Ohlebusch et al.~\cite{DBLP:conf/spire/OhlebuschFG10} proposed the $(3n+o(n))$-bit encoding for supporting range minimum and next/previous smaller value queries in $O(1)$ time. Fischer~\cite{DBLP:journals/tcs/Fischer11} improved the space to $2.54n+o(n)$ bits while maintaining the same query time. 
More precisely, their data structure uses the \textit{colored 2d-min heap} on $A$, which is a 2d-min heap on $A$ with a bichromatic coloring on its nodes. Fischer~\cite{DBLP:journals/tcs/Fischer11} showed that the proposed data structure is succinct by proving that two arrays with distinct colored 2d-min heaps yield the different query answers (the effective entropy of the colored 2d-min heap on $A$ is $2.54n-\Theta(\log n)$ bits~\cite{DBLP:journals/dam/MerliniSV04}).
For any $q \ge 1$, the encoding of \cite{DBLP:journals/tcs/Fischer11} also supports the range $q$-th minimum queries in $O(1)$ time~\cite{DBLP:journals/tcs/JoS16}.

From the above, the encoding of Fischer~\cite{DBLP:journals/tcs/Fischer11} directly gives a $(5.08n+o(n))$-bit data structure for answering the range $q$-th minimum/maximum queries and next/previous larger/smaller value queries in $O(1)$ time by maintaining the data structures of both colored 2d-min and max heaps. Jo and Satti~\cite{DBLP:journals/tcs/JoS16} improved the space to (i) $4n+o(n)$ bits if there are no consecutive equal elements in $A$ and (ii) $4.585n+o(n)$ bits for the general case while supporting all the queries in $O(1)$ time. They also showed that if the query time is not of concern, the space of (ii) can be improved to $4.088n+o(n)$ bits. Recently, Tsur~\cite{DBLP:journals/ipl/Tsur19} improved the space to $3.585n$ bits if there are no consecutive equal elements in $A$ and $3.701n$ bits for the general case. However, their encoding does not support the queries efficiently ($O(n)$ time for all queries).

\subsection{Our Results}
Given an array $A[1, \dots, n]$ of size $n$ with the interval $[i, j] \subset [1, n]$ and the position $1 \le p \le n$, we show the following results for any $q \ge 1$:
\begin{enumerate}[(a)]
\item If $A$ has no two consecutive equal elements, there exists a $(3.585n+o(n))$-bit data structure, which can answer (i) $\RMin{}(i, j)$, $\RMax{}(i, j)$, $\PSV{}(p)$, and $\PLV{}(p)$ queries in $O(1)$ time, and (ii) $\RMin{}(i, j, q)$, $\RMax{}(i, j, q)$, $\NSV{}(p)$, and $\NLV{}(p)$ queries in $O(\log^{(\ell)} n)$ time\footnote{Throughout the paper, we denote $\log n$ as the logarithm to the base $2$}, for any positive constant integer $\ell$.
\item For the general case, the data structure of (a) uses $3.701n+o(n)$ bits while supporting the same query time.
\end{enumerate}
Our results match the current best upper bounds of Tsur~\cite{DBLP:journals/ipl/Tsur19} up to lower-order additive terms while supporting the queries efficiently. The main idea of our encoding data structure is to combine the BP of colored 2d-min and max heap of $A$. Note that all previous encodings in \cite{DBLP:conf/icalp/GawrychowskiN15, DBLP:journals/tcs/JoS16, DBLP:journals/ipl/Tsur19} combine the DFUDS of the (colored) 2d-min and max heap on $A$.

\begin{table}[t]
	\caption{Summary of the upper and lower bounds results of encoding data structures for answering $q$-th minimum/maximum queries and previous/next larger/smaller value queries on the array $A[1, \dots, n]$, for any $q \ge 1$ ($\ell$ is as any positive constant integer). Note that all our upper bound results support range minimum/maximum and previous larger/smaller value queries in $O(1)$ time.} 
\centering
\scalebox{0.8}{
		\begin{tabular}{cccc}
			\hline
			Array Type & Space (in bits) & Query time & Reference \\
			\hline
			\multicolumn{4}{c}{Upper bounds}\\
			\hline
			\multirow{3}{*}{$A[i] \neq A[i+1]$ for all $i \in [1, n-1]$}&  $4n+o(n)$ &  $O(1)$  & ~\cite{DBLP:journals/tcs/JoS16}\\
			&  $3.585n$ & $O(n)$  & ~\cite{DBLP:journals/ipl/Tsur19}\\
			& $3.585n + o(n)$ &  $O(\log^{(\ell)} n)$  & Theorem~\ref{thm:nonconsecutive}\\
			\hline
			\multirow{6}{*}{General array} & $5.08n+o(n)$ & $O(1)$ &~\cite{DBLP:journals/tcs/Fischer11, DBLP:journals/tcs/JoS16}\\ 
			&  $4.088n+o(n)$ & $O(n)$  & 	\multirow{2}{*}{~\cite{DBLP:journals/tcs/JoS16}} \\
			&  $4.585n+o(n)$ & $O(1)$  & \\
			&  $3.701n$ & $O(n)$  & ~\cite{DBLP:journals/ipl/Tsur19}\\
			&  $3.701n+o(n)$ & $O(\log^{(\ell)} n)$  & Theorem~\ref{thm:general}\\
			\hline
			\multicolumn{4}{c}{Lower bounds}\\
			\hline
			Permutation &$3n-\Theta(\log n)$&&~\cite{DBLP:conf/icalp/GawrychowskiN15}\\
			General array& $3.16n-\Theta(\log n)$ && Theorem~\ref{thm:lb}\\
			\hline
	\end{tabular}}
	\label{tab:summary}
\end{table}

\begin{table}[t]
	\caption{Summary of the upper and lower bounds results of encoding data structures for answering $q$-th minimum queries and previous/next smaller value queries on the array $A[1, \dots, n]$, for any $q \ge 1$ ($\ell$ is as any positive constant integer). Here $d_1$ denotes the number of positions $i \in \{2, \dots, n\}$ in $A$ which satisfy $\PSV{}(i-1) = \PSV{}(i)$, after removing all consecutive equal elements in $A$. Note that all our upper bound results support range minimum and previous smaller value queries in $O(1)$ time.} 
\centering
\scalebox{0.8}{
		\begin{tabular}{cccc}
			\hline
			Array Type & Space (in bits) & Query time & Reference \\
			\hline
			\multicolumn{4}{c}{Upper bounds}\\
			\hline
			{$A[i] \neq A[i+1]$ for all $i \in [1, n-1]$}&  $2.585n-d_1+o(n)$ & $O(\log^{(\ell)} n)$  & Corollary~\ref{cor:nonconsecutive}\\
			\hline
			\multirow{5}{*}{General array} & $2.54n+o(n)$ & $O(1)$ &~\cite{DBLP:journals/tcs/Fischer11}\\
			&  $3n+o(n)$ & $O(1)$  & 	~\cite{DBLP:journals/tcs/JoS16} \\
			&  $2.808n-d_1+o(n)$ & $O(\log^{(\ell)} n)$  & Corollary~\ref{cor:general}\\
             &  $2.585n+o(n)$ & $O(\log^{(\ell)} n)$  & Theorem~\ref{thm:cmin_second}\\
			\hline
			\multicolumn{4}{c}{Lower bounds}\\
			\hline
			General array &$2.54n-\Theta(\log n)$&&~\cite{DBLP:journals/tcs/Fischer11}\\
			\hline
	\end{tabular}}
	\label{tab:summary2}
\end{table}

We first consider the case when the array $A$ has no two consecutive equal elements (Section~\ref{sec:noequal}). In this case, we show that by storing the BP of the colored 2d-min heap on $A$ along with the color information of the nodes, there exists an encoding that supports range minimum, range $q$-th minimum, and next/previous smaller value queries efficiently. 
Compared to the data structure of~\cite{DBLP:journals/tcs/JoS16} that uses DFUDS of colored 2d-min heap, the encoding can use less space for storing the color information when $A$ has no consecutive equal elements. 

Next in section~\ref{sec:combine:nonc}, we describe how to combine the data structures on colored 2d-min and max heap on $A$ into a single structure of size at most $3.585n+o(n)$ bits to support range $q$-th minimum/maximum queries and previous/next larger/smaller value queries efficiently. The combined data structure is motivated by the idea of Gawrychowski and Nicholson’s encoding~\cite{DBLP:conf/icalp/GawrychowskiN15} to combine the DFUDS of 2d-min and max heap on $A$. As a consequence, we show that the combined data structure can be easily modified to a data structure of size at most $2.585n+o(n)$ bits, which supports range minimum, range $q$-th minimum, and next/previous smaller value queries in the same time as the combined data structure.

In Section~\ref{sec:equal}, we consider the case that $A$ contains consecutive equal elements. In this case, we show that there exists a data structure of at most $3.701n + o(n)$ bits that supports range $q$-th minimum/maximum queries and previous/next larger/smaller value queries on $A$ efficiently. 
The main idea of the data structure is to combine the data structure on the array $A'$, which is obtained by removing all consecutive equal elements from $A$, with some additional auxiliary structures. 
Again, the data structure can be modified to the data structure of size at most $2.808n + o(n)$ bits, which supports range minimum, range $q$-th minimum, and next/previous smaller value queries in the same time. Furthermore, we show that the worst-case space usage of the data structure can be improved to $2.585n+o(n)$ bits with the same query time. Compare to the Fischer's $(2.54n+o(n))$-bit succinct encoding~\cite{DBLP:journals/tcs/Fischer11}, our data structure takes slightly more space for the worst case input, with slower query time for $q$-th minimum and next smaller value queries. 
However, the space usage of our data structures is dependent on two factors: (i) the number of consecutive equal elements, and (ii) the length of the increasing or decreasing runs in the input. In contrast, the data structure proposed by Fischer~\cite{DBLP:journals/tcs/Fischer11} maintains a fixed space usage regardless of the input. As a result, our data structures may outperform Fischer's data structure and even take less space than the information-theoretical lower bound in certain inputs. 

Finally, in Section~\ref{sec:lower}, we show that the effective entropy of the encoding to support the range $q$-th minimum and maximum queries on $A$ is at least $3.16n-\Theta(\log n)$ bits. Our result is obtained by extending the $(3n-\Theta(\log n))$-bit lower bound of Gawrychowski and Nicholson~\cite{DBLP:conf/icalp/GawrychowskiN15} for answering the range minimum and maximum queries on a permutation of size $n$.
We summarize our results in Table~\ref{tab:summary}.

\section{Preliminaries}
This section introduces some data structures used in our results.
\begin{figure}
\begin{center}
\includegraphics[scale=0.28]{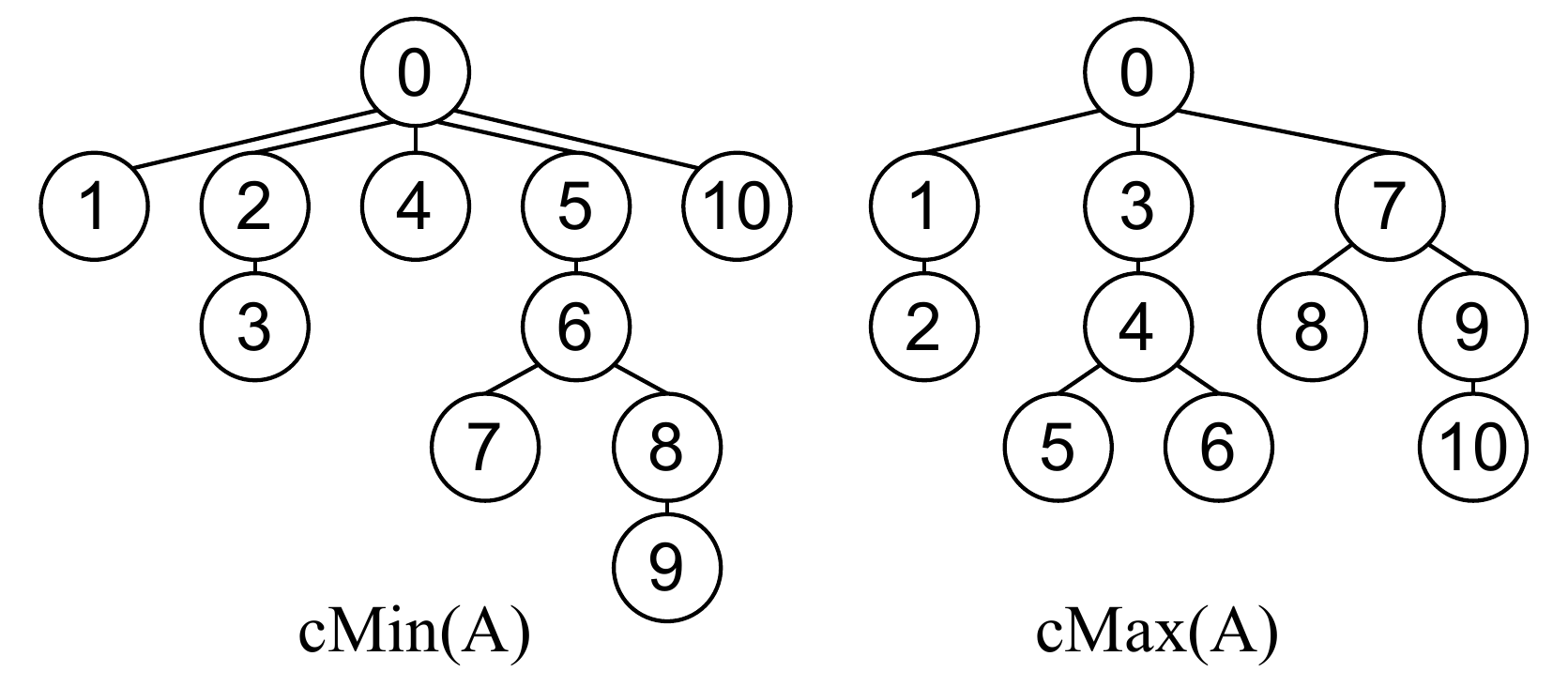}
\end{center}
\caption{$\Min{}(A)$ and $\Max{}(A)$ on the array $A =~5~4~5~3~1~2~6~3~4~1$.}
\label{fig:minheap}
\end{figure}
\\\\
\noindent\textbf{2d min-heap and max-heap. } Given an array $A[1, \dots, n]$ of size $n$, the 2d min-heap on $A$ (denoted by $\Min(A)$)~\cite{DBLP:journals/tcs/Fischer11} is a rooted and ordered tree with $n+1$ nodes, where each node corresponds to the value in $A$, and the children are ordered from left to right. More precisely, $\Min(A)$ is defined as follows:
\begin{enumerate}
    \item The root of $\Min(A)$ corresponds to $A[0]$ ($A[0]$ is defined as $-\infty$). 
    \item For any $i > 0$, $A[i]$ corresponds to the $(i+1)$-th node of $\Min(A)$ according to the preorder traversal. 
    \item For any non-root node corresponds to $A[j]$, its parent node corresponds to $A[\PSV{}(j)]$.
\end{enumerate}

In the rest of the paper, we refer to the node $i$ in $\Min{}(A)$ as the node corresponding to $A[i]$ (i.e., the $(i+1)$-th node according to the preorder traversal).
One can also define the 2d-max heap on $A$ (denoted as $\Max{}(A)$) analogously. More specifically, in $\Max{}(A)$, $A[0]$ is defined as $\infty$, and the parent of node $i > 0$ corresponds to the node $\PLV{}(i)$ (see Figure~\ref{fig:minheap} for an example). In the rest of the paper, we only consider $\Min{}(A)$ unless $\Max{}(A)$ is explicitly mentioned. The same definitions and properties for $\Min{}(A)$ can be applied to $\Max{}(A)$. 
From the definition of $\Min{}(A)$, Tsur~\cite{DBLP:journals/ipl/Tsur19} showed the following lemma. 

\begin{lemma}[\cite{DBLP:journals/ipl/Tsur19}]\label{lem:2dmin}
For any $i \in \{1, 2, \dots, n-1\}$, the following holds:
\begin{enumerate}[(a)]
\item If the node $i$ is an internal node in $\Min{}(A)$, then the node $(i+1)$ is the leftmost child of the node $i$ in $\Min{}(A)$. 
\item If $A$ has no two consecutive equal elements, the node $i$ is an internal node in $\Min{}(A)$ (resp. $\Max{}(A)$) if and only if the node $i$ is a leaf node in $\Max{}(A)$ (resp. $\Min{}(A)$). \end{enumerate}
\end{lemma}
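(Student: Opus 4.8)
\noindent\emph{Proof plan.} The plan is to first isolate a single local characterization of internal nodes and then read both parts off from it. Concretely, I would show that for every $i \in \{1,\dots,n-1\}$ the node $i$ is internal in $\Min{}(A)$ (equivalently, $\Min{}(A)$ is its relevant tree) if and only if $A[i] < A[i+1]$, and symmetrically that node $i$ is internal in $\Max{}(A)$ if and only if $A[i] > A[i+1]$. Once this is in hand, part (a) is essentially the forward implication of the characterization, and part (b) is a one-line case analysis using the hypothesis $A[i]\neq A[i+1]$.

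First I would prove part (a), deriving the characterization along the way. Suppose node $i$ is internal in $\Min{}(A)$ and let $j$ be its leftmost child; by the definition of $\Min{}(A)$ this means $\PSV{}(j)=i$, and in particular $j\ge i+1$. Assume for contradiction $j\ge i+2$. Then $\PSV{}(j)=i$ forces $A[i]<A[j]$ together with $A[k]\ge A[j]$ for all $i<k<j$; taking $k=i+1$ gives $A[i]<A[j]\le A[i+1]$, hence the rightmost position $\le i$ with value below $A[i+1]$ is $i$ itself, i.e. $\PSV{}(i+1)=i$, so node $i+1$ is a child of node $i$ — contradicting that $j>i+1$ is the leftmost child. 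Therefore $j=i+1$, which is part (a); in particular ``node $i$ internal'' implies $A[i]<A[i+1]$. Conversely, if $A[i]<A[i+1]$ then $\PSV{}(i+1)=i$, so node $i$ has node $i+1$ as a child and is internal. The analogous statement for $\Max{}(A)$ is obtained by replacing $\PSV{}$ with $\PLV{}$ and reversing the inequalities, which is legitimate because all definitions and properties of $\Min{}(A)$ carry over to $\Max{}(A)$.

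For part (b), assume $A$ has no two consecutive equal elements. By the characterization, $\Min{}(A)$ is the relevant tree of node $i$ iff $A[i]<A[i+1]$, while node $i$ is a leaf of $\Max{}(A)$ iff it is not internal in $\Max{}(A)$, i.e. iff $A[i]\le A[i+1]$; since $A[i]\neq A[i+1]$ these two conditions coincide. Exchanging the roles of $\Min{}$ and $\Max{}$ yields the ``resp.'' version, completing the proof.

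I do not expect a genuine obstacle here. The two points needing a little care are that the index $i+1$ is always legitimate — which holds precisely because $i\le n-1$, so the argument in part (a) never runs off the array — and that $\PSV{}$ and $\PLV{}$ are everywhere defined thanks to the sentinels $A[0]=-\infty$ and $A[0]=+\infty$. Note that neither the characterization nor part (a) uses the no-two-consecutive-equal-elements hypothesis; it enters only in the very last step of part (b), where it collapses the weak inequality to a strict one.
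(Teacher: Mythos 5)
Your argument is correct. A small point worth noting: this paper does not actually prove Lemma~\ref{lem:2dmin} — it is cited verbatim from Tsur~\cite{DBLP:journals/ipl/Tsur19} — so there is no internal proof to compare against; you have supplied a self-contained one. Your route, reducing both parts to the single local characterization that node $i$ is internal in $\Min{}(A)$ iff $A[i]<A[i+1]$ (and dually for $\Max{}(A)$), is clean and is essentially the standard way to see this. The key steps all check out: if some child $j\ge i+2$ of node $i$ existed with $\PSV{}(j)=i$, then $A[i+1]\ge A[j]>A[i]$, forcing $\PSV{}(i+1)=i$ and contradicting the minimality of $j$; the converse direction ($A[i]<A[i+1]\Rightarrow\PSV{}(i+1)=i$) is immediate; leftmostness of the child $i+1$ follows because preorder numbers in $\Min{}(A)$ coincide with array indices. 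Part (b) then reduces to the observation that under $A[i]\neq A[i+1]$, exactly one of $A[i]<A[i+1]$ and $A[i]>A[i+1]$ holds. You also correctly identify that the no-consecutive-equals hypothesis is used only to collapse the weak inequality in part (b), and that the restriction $i\le n-1$ keeps $i+1$ in range.
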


\begin{figure}
\begin{center}
\includegraphics[scale=0.28]{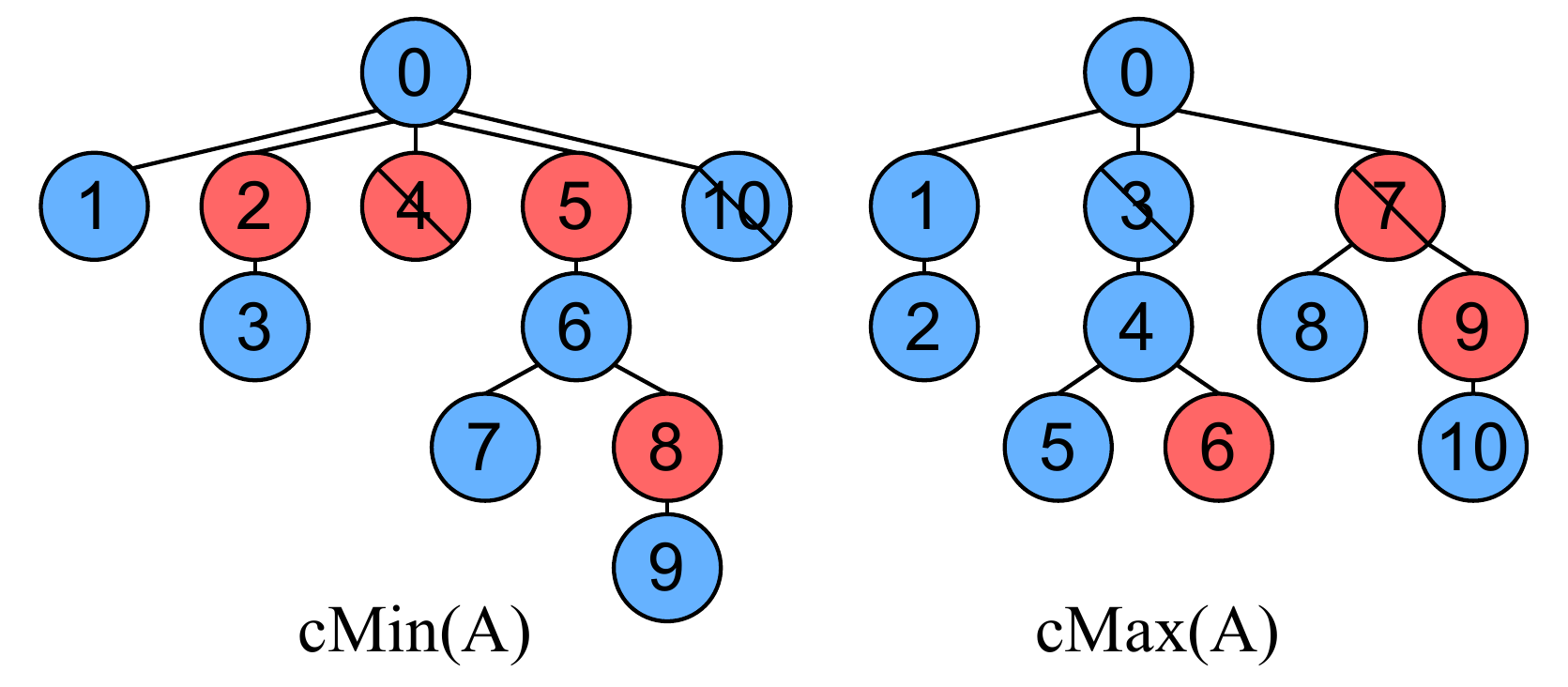}
\end{center}
\caption{$\cMin{}(A)$ and $\cMax{}(A)$ on the array $A =~5~4~5~3~1~2~6~3~4~1$. The nodes with slash line indicate the valid nodes.}
\label{fig:cminheap}
\end{figure}

\noindent\textbf{Colored 2d min-heap and max-heap. } The colored 2d-min heap of $A$ (denoted by $\cMin{}(A)$)~\cite{DBLP:journals/tcs/Fischer11} is $\Min{}(A)$ where each node is colored \textit{red} or \textit{blue} as follows. The node $i$ in $\cMin{}(A)$ is colored red if and only if $i$ is not the leftmost child of its parent node, and $A[i] \neq A[j]$, where the node $j$ is $i$'s immediate left sibling. Otherwise, the node $i$ is colored blue. One can also define the colored 2d-max heap on $A$ (denoted by $\cMax{}(A)$) analogously (see Figure~\ref{fig:cminheap} for an example). The following lemma says that we can obtain the color of some nodes in $\cMin{}(A)$ from their tree structures.

\begin{lemma}\label{lem:c2dmin}
For any node $i$ in $\cMin{}(A)$, the following holds:
\begin{enumerate}[(a)]
\item If the node $i$ is the leftmost child of its parent node, the color of the node $i$ is always blue.
\item If $A$ has no two consecutive equal elements, the color of the node $i$ is always red if its immediate left sibling is a leaf node.
\end{enumerate}
\begin{proof}
(a) is directly proved from the definition of $\cMin{}(A)$. Also, if the immediate left sibling $j$ of the node $i$ is a leaf node, $j$ is equal to $i-1$ (note that the preorder traversal of $\cMin{}(A)$ visits the node $i$ immediately after visiting the node $j$). Thus, if $A$ has no consecutive equal elements, the color of the node $i$ is red.
\end{proof}
\end{lemma}

We say a non-root node is called \text{valid} in $\cMin(A)$ if and only if it is neither the a leftmost child nor an immediate right sibling of a leaf node. Otherwise, the node $i$ is \textit{invalid}. By Lemma~\ref{lem:c2dmin}, if $A$ has no two consecutive equal elements, the color of the invalid nodes of $\cMin{}(A)$ can be decoded from the tree structure.
\\\\
\noindent\textbf{Rank and Select queries on bit arrays. } Given a bit array $B[1, \dots, n]$ of size $n$, and a pattern $p \in \{0, 1\}^{+}$, (i) $\rank{}_{p}(i, B)$ returns the number of occurrence of the pattern $p$ in $B[1, \dots, i]$, and (ii) $\select{}_{p}(j, B)$ returns the first position of the $j$-th occurrence of the pattern $p$ in $B$. The following lemma shows that there exists a succinct encoding, which supports both $\rank$ and $\select$ queries on $B$ efficiently.

\begin{table}[htp]
    \centering
    \scalebox{0.8}{
    \begin{tabular}{l p{14.5cm}}
    \hline
    \textbf{Symbol} & \textbf{Description} \\
    \hline
    \hline
    Section~\ref{sec:cmin} & \\
    \hline
    $c_{min}$ (resp. $c_{max}$) & Bit array that stores the color of all valid nodes in $\cMin{}(A)$ (resp. $\cMax{}(A)$). \\
    $M_{\ell'}$ & Bit array that marks the nodes in $\cMin(A)$ at the $\ell'$-th level. \\
    $cr(p)$ & Child rank of node $p$, i.e., the number of left siblings of $p$. \\
    $pre_{\ell'}(p)$  & The rightmost sibling of $p$ to the left that is marked at the $\ell'$-th level. \\
    $next_{\ell'}(p)$ & The leftmost sibling of $p$ to the right that is marked at the $\ell'$-th level. \\
    $P_{\ell'}$ & Bit array that, for each marked node $p$ at the $\ell'$-th level, stores the difference in child rank between $p$ and (i) the rightmost sibling to its left and (ii) the leftmost sibling to its right that, where each sibling is either the marked node at the $(\ell'-1)$-th level or the red-colored node.\\
     \hline
    Section~\ref{sec:combine:nonc} & \\
    \hline
    $U$ & Bit array that indicates whether a node is an internal node in $\cMin(A)$ or $\cMax(A)$.\\
    $S$ & Bit array that, for each leaf node $i$ in $T \in {\cMin(A), \cMax(A)}$, stores the number of $1$'s between positions $f(i, T)$ and $f(i+1, T)$.\\
    $D$ and $E$ & Ternary and bit arrays, respectively, that indicate the number of $1$'s between each pair of consecutive $0$'s in $S$. \\
    $d_1$ and $d_2$ & The number of $1$'s and $2$'s in $D$, respectively. \\
    $c_{minmax}$ & Bit array that concatenates $c_{min}$ and $c_{max}$. \\
    $B$ & Subarray $\BP(\cMin{}(A)$ of size $f(n, \BP(\cMin{}(A)) -1$. \\
    $f'(n)$ & $f(n, \BP(\cMin{}(A))-1$. \\
    $\alpha(j)$, $\beta(j)$, and $\gamma(j)$ & The positions in $U$, $D$, and $E$ corresponding to position $j$ in $B$, respectively. \\
    $U_1$, $D_1$, $E_1$ & Bit arrays that indicate the starting positions of each block in $U$, $D$, and $E$, respectively. \\
    $U_2$, $D_2$, $E_2$ & Bit arrays that indicate whether two consecutive blocks in $U$, $D$, and $E$ have the same starting positions, respectively. \\
     $F_i$ & Subsequence of $i$-th bad block of $E$. \\
    $g(u, d, e, b)$ & Function that reconstructs a subarray of $B$ from the subarrays $u$, $d$, and $e$ of $U$, $D$, and $E$, respectively, where $b \in \{0, 1\}$. \\
    $R$ & Array that stores the number of consecutive $1$'s from the beginning of each block in $B$. \\
      \hline
    Section~\ref{sec:equal} & \\
     \hline
    $C$ & Bit array that indicates positions of consecutive equal elements in $A$.\\
    $A'$ & Array that discards all consecutive equal elements in $A$. \\
    $B_A$ (resp. $B_{A'}$) & $\BP(\cMin{}(A))$ (resp. $\BP(\cMin{}(A'))$). \\
     $b'(j)$ & Position in $B_{A'}$ corresponding to the original position $j$ in $B_A$. \\
      $s_i$ & Starting position of the $i$-th block in $B_A$. \\
      $M_{B}$ (resp. $M_C$) & Bit arrays that indicate the corresponding positions of each $s_i$ in $B_{A'}$ (resp. $C$). \\
      $M'_{B}$ (resp. $M'_C$) & Bit arrays that indicate whether two consecutive positions $s_i$ and $s_{i+1}$ have the same corresponding positions in $B_{A'}$ (resp. $C$). \\
      $h(b, c)$ &  Function that reconstructs a subarray of $B_A$ from the subarrays $b$, $c$ of $B_{A'}$ and $C$, respectively. \\
      $M_S$ & Bit array that indicates the first bit of the $i$-th block of $B_{A}$ is $0$ or not. \\
      $m$ & The number of leftmost children in $\cMin{}(A)$.\\
      $I$ & Bit array that indicates the leftmost children in $\cMin{}(A)$.\\
      $I_c$ & Bit array that stores the colors of the nodes in $\cMin(A)$ that are not leftmost children.\\
      $B'$ & Bit array constructed by removing certain bits from $B_A$.\\
       $N_{B}$ (resp. $N_I$) & Bit arrays that indicate the corresponding positions of each $s_i$ in $B'$ (resp. $I$). \\
      $N'_{B}$ (resp. $N'_I$) & Bit arrays that indicate whether two consecutive positions $s_i$ and $s_{i+1}$ have the same corresponding positions in $B'$ (resp. $I$). \\
      $B_S$ & Bit array that stores the bits of $B_A$ at all positions $s_i$. \\
      $g_2(b', I_b)$ &  Function that reconstructs a subarray of $B_A$ from the subarrays $b'$, $I_b$ of $B'$ and $I$, respectively. \\
    \hline
    Section~\ref{sec:lower}& \\
     \hline
     $\mathcal{A}_n$ & Set of arrays of size $n$ constructed using the procedure described in Section~\ref{sec:lower}. \\
     $X$ & Set of $n - k$ positions that are not chosen during the construction of an array in $\mathcal{A}_n$.\\
    $\pi_{n-k}$ & Baxter permutation of size $n-k$.\\
    \hline
    \hline
    \end{tabular}}
    \caption{Summary of the main notations used in this article}
    \label{tab:symbols}
\end{table}

\begin{lemma}[\cite{DBLP:journals/jal/MunroRR01, DBLP:journals/talg/RamanRS07}]\label{lem:rrr}
Given a bit array $B[1, \dots, n]$ of size $n$ containing $m$ $1$s, and a pattern $p \in \{0, 1\}^{+}$ with $|p| \le \frac{\log n}{2}$,
$B$ can be stored in $(\log {n \choose m} + o(n))$ bits, supporting access to any $\Theta(\log n)$-sized consecutive bits of $B$ in $O(1)$ time. 
Furthermore, if one can access any $\Theta(\log n)$-sized consecutive bits of $B$ in $O(1)$ time, both $\rank{}_{p}(i, B)$ and $\select{}_{p}(j, B)$ queries can be answered in $O(1)$ time using $o(n)$-bit auxiliary structures.
\end{lemma}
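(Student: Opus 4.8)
The plan is to build the structure in two layers that mirror the two bullets, fixing throughout a block length $t = c\log n$ for a sufficiently small constant $c>0$ (so that $2^{O(t)} = n^{1-\Omega(1)}$) and a superblock length $s = \Theta(\log^2 n)$. I would first establish the base case $p=\texttt{1}$ of the second bullet: store for each superblock the number of $1$s of $B$ preceding it ($O(\log n)$ bits each, $O(n/\log n)$ total) and for each block the number of $1$s of $B$ between the start of its superblock and the start of the block ($O(\log\log n)$ bits each, $O(n\log\log n/\log n)$ total); a single universal table maps a block's $t$ bits together with a within‑block position to the number of $1$s up to that position, occupying $O(2^{t}\cdot t\log t)=o(n)$ bits and queried in $O(1)$ time using the assumed $\Theta(\log n)$‑block access to $B$. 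Summing the superblock count, the block count and the table value answers $\rank_{\texttt{1}}$ in $O(1)$ time; $\select_{\texttt{1}}$ follows by the usual sampling of every $\Theta(\log^2 n)$‑th $1$ plus one more level of the same kind of table, again $o(n)$ bits and $O(1)$ time, and $\rank_{\texttt{0}},\select_{\texttt{0}}$ are obtained symmetrically (e.g.\ $\rank_{\texttt{0}}(i,B)=i-\rank_{\texttt{1}}(i,B)$).

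Next I would reduce a general pattern $p$ with $|p|\le\frac12\log n$ to the base case. Let $C[1,\dots,n]$ be the indicator array with $C[j]=1$ iff $j\ge|p|$ and $B[j-|p|+1,\dots,j]=p$, so that $\rank_p(i,B)=\rank_{\texttt{1}}(i,C)$ and $\select_p(j,B)=\select_{\texttt{1}}(j,C)-|p|+1$. The key observation is that any $t=\Theta(\log n)$ consecutive bits $C[a,\dots,a+t-1]$ are a function of the $t+|p|-1=\Theta(\log n)$ consecutive bits $B[a-|p|+1,\dots,a+t-1]$ (so occurrences straddling a block boundary are handled automatically, since an occurrence has length $\le\frac12\log n$ and lies well inside one $O(\log n)$‑window of $B$), and for a \emph{fixed} $p$ this function is realized by one more universal table indexed by a window of $B$ of length $\le t+|p|-1$, which has $O(2^{t+|p|-1}\cdot t)=O(n^{1-\Omega(1)}\cdot\mathrm{polylog}\,n)=o(n)$ bits when $c$ is small enough. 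Hence $C$ satisfies the hypothesis of the base case, and applying the base‑case structure to $C$ (a further $o(n)$ bits, $O(1)$ time) proves the second bullet in full.

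For the first bullet it then suffices to give a $(\log\binom{n}{m}+o(n))$‑bit encoding of $B$ from which any $\Theta(\log n)$ consecutive bits are decodable in $O(1)$ time, after which the second bullet applies verbatim. I would cut $B$ into blocks of length $t$ and store each block as a pair (\emph{class} $c_k$ = number of $1$s, \emph{offset} = index of the block among all $t$‑bit strings with $c_k$ ones in a fixed order): the class descriptors use $\lceil\log(t+1)\rceil$ bits each, $O(n\log\log n/\log n)=o(n)$ in total, and the offsets use $\sum_k\lceil\log\binom{t}{c_k}\rceil\le\sum_k\log\binom{t}{c_k}+O(n/\log n)$ bits, which is at most $\log\binom{n}{m}+o(n)$ because every length‑$n$ string with $m$ ones corresponds to exactly one choice of block contents, so $\prod_k\binom{t}{c_k}\le\binom{n}{m}$. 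A sampled, two‑level pointer array ($o(n)$ bits) locates the start of each variable‑length offset, and a universal table of $o(n)$ bits maps (class, offset) back to the block's $t$ bits, so decoding a $\Theta(\log n)$‑window of $B$ is a constant number of table lookups; the $o(n)$‑bit directories of the second bullet can then be stored in plain form on top.

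The main obstacle, and the only place where the choice of constants really matters, is making the two universal tables for pattern handling (decoding windows of $C$, and decoding blocks of $B$) small enough: they must look at enough bits of $B$ to contain any occurrence of $p$, i.e.\ windows of length $t+|p|-1$ up to roughly $\frac32\log n$, so the block length $t$ has to be a small constant fraction of $\log n$ rather than the usual $\frac12\log n$; with $t$ chosen so that these windows have at most, say, $\frac{3}{4}\log n$ bits, the tables have $O(n^{3/4}\,\mathrm{polylog}\,n)=o(n)$ bits. Everything else is the standard indexable‑dictionary bookkeeping — the block/superblock rank and select directories, the $n^{1-\Omega(1)}$‑sized universal tables, and the counting identity $\prod_k\binom{t}{c_k}\le\binom{n}{m}$ — which I would import from the references cited in the statement~\cite{DBLP:journals/jal/MunroRR01,DBLP:journals/talg/RamanRS07} rather than reproduce in detail.
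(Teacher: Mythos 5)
The paper states Lemma~\ref{lem:rrr} without proof, citing it directly to~\cite{DBLP:journals/jal/MunroRR01,DBLP:journals/talg/RamanRS07}, so there is no in-paper argument to compare against. Your reconstruction is correct and is essentially the standard one from those references: the RRR class/offset block encoding with a two-level pointer array for the first bullet, the block/superblock directories plus $n^{1-\Omega(1)}$-sized universal tables for $\rank_{\texttt{1}}/\select_{\texttt{1}}$ in the second bullet, and the reduction of a general pattern $p$ to the single-bit case via the end-marker indicator array $C$ together with a window-lookup table of width $t+|p|-1 \le \tfrac{3}{4}\log n$ (which is exactly where the hypothesis $|p|\le\tfrac{1}{2}\log n$ is needed, and which you correctly flag as the reason for shrinking the block length below the usual $\tfrac{1}{2}\log n$).
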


\noindent\textbf{Balanced-parenthesis of trees. } 
Given a rooted and ordered tree $T$ with $n$ nodes, the balanced-parenthesis (BP) of $T$ (denoted by $\BP(T)$)~\cite{DBLP:journals/siamcomp/MunroR01} is a bit array defined as follows. We perform  a preorder traversal of $T$. We then add a $0$ to $\BP(T)$ when we first visit a node and add a $1$ to $\BP(T)$ after visiting all nodes in the subtree of the node. Since we add single $0$ and $1$ to $\BP(T)$ per each node in  $T$, the size of $\BP{}(T)$ is $2n$. For any node $i$ in $T$, we define $f(i, T)$ and $s(i, T)$ as the positions of the $0$ and $1$ in $\BP(T)$ that are added when node $i$ is visited during the traversal.
When $T$ is clear from the context, we write $f(i)$ (resp. $s(i)$) to denote $f(i, T)$ (resp. $s(i, T)$). If $T$ is a 2d-min heap, $f(i, T) = \select{}_{0}(i+1, BP(T))$ by the definition of 2d-min heap.

\noindent\textbf{Notations. } 
The remaining main notations used throughout the paper are summarized in Table~\ref{tab:symbols}.

\section{Data structure on arrays with no consecutive equal elements}\label{sec:noequal}
In this section, for any positive constant integer $\ell$, we present a $(3.585n + o(n))$-bit data structure on $A[1, \dots, n]$, which supports (i) range minimum/maximum and previous larger/smaller queries on $A$ in $O(1)$ time, and (ii) range $q$-th minimum/maximum and next larger/smaller value queries on $A$ in $O(\log^{(\ell)}n)$ time for any $q \ge 1$, when there are no two consecutive equal elements in $A$. We first describe the data structure on $\cMin{}(A)$ for answering the range minimum, range $q$-th minimum, and next/previous smaller value queries on $A$. Next, we show how to combine the data structures on $\cMin{}(A)$ and $\cMax{}(A)$ in a single structure. 

\subsection{Encoding data structure on $\cMin{}(A)$}\label{sec:cmin}
We store $\cMin{}(A)$ by storing its tree structure along with the color information of the nodes.
To store the tree structure, we use $\BP(\cMin{}(A))$. Also, for storing the color information of the nodes, we use a bit array $c_{min}$, which stores the color of all valid nodes in $\cMin{}(A)$ according to the preorder traversal order. In $c_{min}$ we use $0$ (resp. $1$) to indicate the color blue (resp. red). It is clear that $\cMin{}(A)$ can be reconstructed from $\BP{}(\cMin{}(A))$ and $c_{min}$.
Since $\BP{}(\cMin{}(A))$ and $c_{min}$ takes  $2(n+1)$ bits and at most $n$ bits, respectively, the total space for storing $\cMin{}(A)$ takes at most $3n + 2$ bits.
Note that a similar idea is used in Jo and Satti's \textit{extended DFUDS}~\cite{DBLP:journals/tcs/JoS16}, which uses the DFUDS of $\cMin{}(A)$ for storing the tree structure. However, extended DFUDS stores the color of all nodes other than the leftmost children, whereas $c_{min}$ does not store the color of all invalid nodes. The following lemma shows that from $\BP{}(\cMin{}(A))$, we can check whether the node $i$ is valid or not without decoding the entire tree structure.

\begin{lemma}\label{lem:valid}
When $A$ has no consecutive equal elements, the node $i$ is valid in $\cMin{}(A)$ if and only if $f(i) > 2$ and both $BP(\cMin{}(A))[f(i)-2]$ and $BP(\cMin{}(A))[f(i)-1]$ are $1$.
\begin{proof}
If both $BP(\cMin{}(A))[f(i)-2]$ and $BP(\cMin{}(A))[f(i)-1]$ are $1$, the preorder traversal of $\cMin{}(A)$ must complete the traversal of two subtrees consecutively just before visiting the node $i$ for the first time, which implies the node $i$'s immediate left sibling is not a leaf node (hence the node $i$ is valid).

Conversely, if $BP(\cMin{}(A))[f(i)-1] = 0$, the node $i-1$ is an internal node in $\cMin{}(A)$. Thus, the node $i$ is the leftmost child of the node $(i-1)$ by Lemma~\ref{lem:2dmin}. Next, if $BP(\cMin{}(A))[f(i)-2] = 0$ and $BP(\cMin{}(A))[f(i)-1] = 1$, the node $(i-1)$ is the immediate left sibling of the node $i$ since $f(i)-2$ is equal to $f(i-1)$. Also the node $(i-1)$ is a leaf node since $f(i)-1$ is equal to $s(i-1)$. Thus, the node $i$ is invalid in this case.
\end{proof}
\end{lemma}

Now we describe how to support range minimum, range $q$-th minimum, and next/previous smaller value queries efficiently on $A$ using $\BP{}(\cMin{}(A))$ and $c_{min}$ with $o(n)$-bit additional auxiliary structures. Note that both the range minimum and previous smaller value query on $A$ can be answered in $O(1)$ time using $BP(\cMin{}(A))$ with $o(n)$-bit auxiliary structures~\cite{DBLP:journals/talg/NavarroS14, DBLP:conf/dcc/FerradaN16}. Thus, it is enough to consider how to support a range $q$-th minimum and next smaller value queries on $A$. We introduce the following lemma of Jo and Satti~\cite{DBLP:journals/tcs/JoS16}, which shows that one can answer both queries with some navigational and color queries on $\cMin{}(A)$.

\begin{lemma}[Lemma 3.1 in \cite{DBLP:journals/tcs/JoS16}]\label{lem:qthoperation}
Given $\cMin{}(A)$, suppose there exists a data structure, which can answer (i) the tree navigational queries (next/previous sibling, subtree size, degree, level ancestor, child rank, child select, and parent\footnote{refer to Table 1 in \cite{DBLP:journals/talg/NavarroS14} for detailed definitions of the queries}), and (ii) the following color queries:
\begin{itemize}
    \item $\colorr{}(i)$: return the color of the node $i$.
    \item $\PRS{}(i)$: return the rightmost red sibling to the left of the node $i$.
    \item $\NRS{}(i)$: return the leftmost red sibling to the right of the node $i$.
\end{itemize}
Then for any $q \ge 1$, range $q$-th minimum, and the next smaller value queries on $A$ can be answered in $O(1)$ tree navigational queries along with $O(1)$ $\colorr{}(i)$, $\PRS{}(i)$, and $\NRS{}(i)$ queries on $\cMin{}(A)$.
\end{lemma}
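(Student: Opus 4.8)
The plan is to answer each query with $O(1)$ calls to the assumed primitives by working in $\Min{}(A)$ and exploiting two structural facts, both immediate from $\mathrm{parent}(x)=\PSV{}(x)$: (i) for any node $v$ with children $c_1<c_2<\dots<c_r$ we have $A[c_1]\ge A[c_2]\ge\dots\ge A[c_r]$; and (ii) every non-root node is the strict minimum of its own subtree (all proper descendants have larger $A$-value). Combined with the definition of the colouring, these say that among the children of $v$ the maximal runs of equal $A$-value are exactly the maximal runs of consecutive children that begin at the leftmost child or at a red child and continue with blue children; hence $\PRS{}$ and $\NRS{}$ are precisely the operations that jump to the left, resp.\ right, end of such a run.

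For $\NSV{}(p)$: by (ii) nothing inside $p$'s subtree is smaller than $A[p]$, and by (i) the right siblings of $p$ that still lie in $p$'s run (the blue ones immediately following $p$) and their subtrees are not smaller than $A[p]$ either; so the first position to the right of $p$ with value $<A[p]$ is either the first right sibling of $p$ that leaves this run, namely $\NRS{}(p)$ (whose value is $<A[p]$ by (i)), or, if no such sibling exists, the node that follows the whole subtree of $u:=\mathrm{parent}(p)$ in preorder, which by the preorder numbering of $\Min{}(A)$ is the node $u+\mathrm{size}(u)$ and has value $<A[u]<A[p]$ (here $\mathrm{size}$ denotes subtree size). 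Hence $\NSV{}(p)=\NRS{}(p)$ when $\NRS{}(p)$ exists, and $\NSV{}(p)=u+\mathrm{size}(u)$ (which equals $n+1$ exactly when $u$'s subtree reaches the right end) otherwise; this costs $O(t(n)+s(n))$.

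For $\RMin{}(i,j,q)$: first test whether $i$ is an ancestor of $j$, which holds iff $\mathrm{size}(i)\ge j-i+1$; in that case $A[i]$ is by (ii) the unique minimum of $A[i..j]$ and the answer is $i$ for every $q$. Otherwise set $v=\mathrm{LCA}(i,j)$ and let $w$ be the child of $v$ on the path to $j$ (one checks $i\le w\le j$). By (i)--(ii) the minimum of $A[i..j]$ equals $A[w]$, and it is attained exactly at the children $c_t$ of $v$ with $\max(i',r)\le c_t\le w$, where $r$ is the start of $w$'s run --- equal to $w$ itself if $w$ is a leftmost child or red (tested with $\colorr{}$ and child rank), to $\PRS{}(w)$ if that exists, and to the leftmost child of $v$ otherwise --- and $i'$ is the first child of $v$ that is $\ge i$ (the child of $v$ whose subtree contains $i$ if that child equals $i$, and its next sibling otherwise). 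A child-rank call on $\max(i',r)$ and on $w$ yields the number $k$ of these minima, and then $\RMin{}(i,j,q)$ equals $w$ if $q\ge k$ and otherwise the $q$-th minimum, recovered by one child-select call; all of this is $O(t(n)+s(n))$, with $\mathrm{LCA}$, level-ancestor, and ``child of $v$ towards $j$'' taken from the standard balanced-parenthesis navigational repertoire referenced in the statement.

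The main obstacle is pinning down the left endpoint of the answer range of $\RMin{}(i,j,q)$: the query endpoint $i$ may fall strictly inside the subtree of a child of $v$ or in the interior of $w$'s run, and $c_{min}$ records colours only for valid nodes, so one must verify in each such configuration that $\max(i',r)$ is indeed the leftmost minimum of $A[i..j]$ and that Lemma~\ref{lem:c2dmin} supplies the colours needed to locate the run endpoints. Once the run structure (i)--(ii) is established, the remaining verification is a routine, if slightly fiddly, case analysis.
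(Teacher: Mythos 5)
The paper does not prove this lemma; it imports it verbatim from Jo and Satti~\cite{DBLP:journals/tcs/JoS16}, and the surrounding text of Section~3 only explains how the hypotheses (tree navigation in $O(1)$ and $\colorr{}$, $\PRS{}$, $\NRS{}$ via $c_{min}$ and the multi-level marking scheme) are subsequently realised on the BP encoding. So there is no in-paper derivation to compare your attempt against.

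Read on its own, your reconstruction rests on the correct structural facts: in $\Min{}(A)$ the children of a node have non-increasing $A$-values, every node is the strict minimum of its own subtree, and by the colouring rule the maximal runs of equal-value children are exactly the blocks that start at a leftmost or red child and continue through blue siblings. Your reduction of $\NSV{}(p)$ to $\NRS{}(p)$, falling back to $\text{parent}(p)+\text{size}(\text{parent}(p))$ when no red right sibling exists, is sound (the intermediate claim ``$A[u+\text{size}(u)]<A[u]$'' should be ``$\le$'', but the chain still gives $<A[p]$, which is all you use). The $\RMin{}(i,j,q)$ description, locating $v$, the child $w$ of $v$ towards $j$, the run start $r$, and the first eligible child $i'$, and then answering by two child-rank calls and one child-select, also matches the intended mechanism.

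The one concrete gap is that you invoke $\mathrm{LCA}(i,j)$ and ``the child of $v$ on the path to $j$'', neither of which is among the navigational primitives the lemma actually grants (next/previous sibling, subtree size, degree, level ancestor, child rank/select, parent), and level ancestor alone does not give them in $O(1)$ without a depth query. In the paper's setting this is easily repaired: the sentence immediately preceding the lemma notes that $\RMin{}(i,j)$ itself is available in $O(1)$ from $\BP{}(\cMin{}(A))$, and once you have \emph{some} minimum position $w_0$ you obtain $v$ as $\mathrm{parent}(w_0)$ and can reach $w$ and the run boundaries with the listed sibling/rank/select/$\PRS{}$ operations. You should either make that substitution explicit or add $\mathrm{LCA}$ (standard for BP) to the assumed repertoire. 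Beyond that, the ``routine, if slightly fiddly, case analysis'' you flag (e.g.\ $i$ being itself a child of $v$, $j$ lying strictly inside $w$'s subtree, $q$ exceeding the count $k$) does indeed need to be written out, but your formulas $\max(i',r)\le c_t\le w$ and ``return $w$ when $q\ge k$'' handle them correctly once spelled out.
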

\begin{proof}
In the proof of Lemma 3.1 in \cite{DBLP:journals/tcs/JoS16}, Jo and Satti described how to support range $q$-th minimum and next smaller value queries on $A$ in the paragraphs on $\textsf{RkMinQ}_A(i, j)$ and $\textsf{NSV}_A(i)$, respectively. They also introduced the array $V_{min}$ and the operation $node\_color$ to support $\colorr{}(i)$; other operations are analogous to those in this paper.
\end{proof}

Since all tree navigational queries in Lemma~\ref{lem:qthoperation} can be answered in $O(1)$ time using $\BP{}(\cMin{}(A))$ with $o(n)$-bit auxiliary structures~\cite{DBLP:journals/talg/NavarroS14}, it is sufficient to show how to support $\colorr{}(i)$, $\PRS{}(i)$, and $\NRS{}(i)$ queries using $\BP{}(\cMin{}(A))$ and $c_{min}$. By Lemma~\ref{lem:rrr} and \ref{lem:valid}, we can compute $\colorr{}(i)$ in $O(1)$ time using $o(n)$-bit auxiliary structures by the following procedure: We first check whether the node $i$ is valid using $O(1)$ time by checking the values at the positions $f(i)-1$ and $f(i)-2$ in $\BP{}(\cMin{}(A))$. If the node $i$ is valid (i.e., both the values are $1$), we answer $\colorr{}(i)$ in $O(1)$ time by returning $c_{min}[j]$ where $j$ is $\rank{}_{110}(f(i), \BP(\cMin{}(A)))$; otherwise, by Lemma~\ref{lem:valid}, we answer $\colorr{}(i)$ as blue if and only if the node $i$ is the leftmost child of its parent node. Next, for answering $\PRS{}(i)$ and $\NRS{}(i)$, we construct the following $\ell$-level structure for any positive constant integer $\ell$:

\begin{itemize}[leftmargin=*, listparindent=1.5em]
  \item At the zeroth level, we mark every $(\log n \log \log n)$-th child node and maintain a bit array $M_0[1, \dots, n]$ where $M_0[t] = 1$ if and only if the node $t$ is marked (recall that the node $t$ is the node in $\Min{}(A)$ whose preorder number is $t$). Since there are $n/(\log n \log\log n) = o(n)$ marked nodes, we can store $M_0$ using $o(n)$ bits while supporting $\rank{}$ queries in $O(1)$ time by Lemma~\ref{lem:rrr} (in the rest of the paper, we ignore all floors and ceilings, which do not affect to the results). Also we maintain an array $P_0$ of size $n /(\log n\log\log n)$ where $P_0[j]$ stores both $\PRS{}(s)$ and $\NRS{}(s)$ if $s$ is the $j$-th marked node according to the preorder traversal order. We can store $P_0$ using $O(n \log n /(\log n\log\log n)) = o(n)$ bits.
    
    \item When $\ell > 0$, for the $\ell'$-th level with $0  < \ell' \le \ell$, we mark every $(\log^{(\ell'+1)} n\log^{(\ell'+2)} n)$-th child node. We then maintain a bit array $M_{\ell'}$ which is defined analogously to $M_0$. We can store $M_{\ell'}$  using $o(n)$ bits by Lemma~\ref{lem:rrr}.
    
    Now for any node $p$, let $cr(p)$ be the child rank of $p$, i.e., the number of left siblings of $p$. Also, let $pre_{(\ell'-1)}(p)$ (resp. $next_{(\ell'-1)}(p)$) be the rightmost sibling of $p$ to the left (resp. leftmost sibling of $p$ to the right) which is marked at the $(\ell'-1)$-th level.  

    Suppose $p$ is the $j$-th marked node at the current level in preorder traversal order.
    Then we define an array $P_{\ell'}$ of size $n /(\log^{(\ell')} n\log^{(\ell'+1)} n)$ as $P_{\ell'}[j]$ stores both (i) the smaller value between $cr(p) - cr(\PRS{}(p))$ and $cr(p) - cr(pre_{(\ell'-1)}(p))$, and (ii) the smaller value between $cr(\NRS{}(p))-cr(p)$ and $cr(next_{(\ell'-1)}(p))-cr(p)$. 
    Since both (i) and (ii) are at most $\log^{(\ell')} n \log^{(\ell'+1)} n$, we can store $P_{\ell'}$ using $O(n \log^{(\ell'+1)}n /(\log^{(\ell'+1)} n\log^{(\ell'+2)} n)) = o(n)$ bits. 
    
    Therefore, the overall space is $O(n / \log^{(\ell+2)} n) = o(n)$ bits in total for any positive constant integer $\ell$. 
\end{itemize}

To answer $\PRS{}(i)$ (the procedure for $\NRS{}(i)$ is analogous), we begin by scanning the left siblings of $i$ using the previous sibling operation. Each time a node $i_1$ is visited during the scan, we check in $O(1)$ time whether (i) $\colorr{}(i_1) $ is red, or (ii) $M_{\ell}[i_1] = 1$. If neither condition holds, we continue scanning. By the definition of $M_{\ell}$, one of these conditions is guaranteed to be met within at most $O(\log^{(\ell+1)} n\log^{(\ell+2)} n) = O(\log^{(\ell)} n)$ previous sibling operations, which takes $O(\log^{(\ell)} n)$ time~\cite{DBLP:journals/talg/NavarroS14}. 
If $i_1$ satisfies condition (i), we return $i_1$ as the answer.

If $i_1$ satisfies condition (ii), we jump to $i_1$'s left sibling $i_2$, whose child rank is given by $cr(i_1) - P_{\ell}[j]$, where $j = \rank{}_{1}(i_1, M_{\ell})$. By the definition of $P_{\ell}$, the node $i_2$ always satisfy either $\colorr{}(i_2)$ is red  or $M_{\ell-1}[i_2] = 1$.
We repeat this procedure, decrementing $\ell$ by 1 at each step, until we find a node whose color is red. Since there are at most $\ell$ such iterations and each takes $O(1)$ time, we can answer $\PRS{}(i)$ in $O(\log^{(\ell)} n)$ time in total. We summarize this result in the following theorem.

\begin{theorem}\label{thm:cmin}
Given an array $A[1, \dots, n]$ of size $n$ without consecutive equal elements, we can answer (i) range minimum and previous smaller value queries in $O(1)$ time, and (ii) range $q$-th minimum and next smaller value queries for any $q \ge 1$ in $O(\log^{(\ell)}n)$ time for any positive constant integer $\ell$, using $\BP{}(\cMin{}(A))$ and $c_{min}$ with $o(n)$-bit auxiliary structures.
\end{theorem}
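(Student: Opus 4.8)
The plan is to assemble the theorem from the ingredients already developed: the stored data is $\BP(\cMin(A))$ together with the color bit array $c_{min}$, and every other component is an $o(n)$-bit index. Concretely I would add (i) the rank/select structures of Lemma~\ref{lem:rrr} on $\BP(\cMin(A))$ that support the Navarro--Sadakane ordinal-tree operations in $O(1)$ time~\cite{DBLP:journals/talg/NavarroS14,DBLP:conf/dcc/FerradaN16}, plus a $\rank{}_{110}$ index on $\BP(\cMin(A))$ and a rank index on $c_{min}$, and (ii) the $\ell'$-level sampling structure, namely the sparse bit arrays $M_1,\dots,M_{\ell'}$ with their rank indexes and the offset tables $P_1,\dots,P_{\ell'}$. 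The first thing to verify is that this is all $o(n)$ bits: the tree index and the $\rank{}_{110}$ index are $o(n)$ by Lemma~\ref{lem:rrr}, each $M_i$ is sparse and hence $o(n)$, and since every entry of $P_i$ is at most $\log^{(i-1)}n\,\log^{(i)}n$ we get $|P_i| = O(n\log^{(i)}n/(\log^{(i)}n\,\log^{(i+1)}n)) = O(n/\log^{(i+1)}n)$, so the whole hierarchy is $O(n/\log^{(\ell'+1)}n) = o(n)$.

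For part~(i), observe that $\PSV(p)$ is exactly the parent of node $p$ in $\Min(A)$ by the definition of $\Min(A)$, and $\RMin(i,j)$ reduces in the standard Cartesian-tree fashion to a constant number of ancestor/child computations in $\Min(A)$; both are therefore answered in $O(1)$ time on $\BP(\cMin(A))$ (the colors play no role here) using only the $o(n)$-bit tree index. Hence part~(i) is immediate, and it is part~(ii) that needs the colors.

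For part~(ii) I would invoke Lemma~\ref{lem:qthoperation}: a range $q$-th minimum query and an $\NSV$ query each reduce to $O(1)$ tree-navigational operations, costing $O(1)$ as above, plus $O(1)$ calls to $\colorr$, $\PRS$, and $\NRS$. For $\colorr(i)$ the routine is: decide whether node $i$ is valid via Lemma~\ref{lem:valid} by reading the two bits $\BP(\cMin(A))[f(i)-2]$ and $\BP(\cMin(A))[f(i)-1]$; if $i$ is valid return $c_{min}[\rank{}_{110}(f(i),\BP(\cMin(A)))]$, and if $i$ is invalid return blue exactly when $i$ is a leftmost child and red otherwise (Lemma~\ref{lem:c2dmin}) --- each step $O(1)$ by Lemma~\ref{lem:rrr}. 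The substantive part is $\PRS$ and $\NRS$, whose answers can lie arbitrarily far along the child list: I would answer $\PRS(i)$ by walking left over the siblings of $i$ (via the previous-sibling operation), stopping at the first sibling that is either red --- in which case it is the answer --- or level-$\ell'$-marked, call it $i_1$; a single lookup in $P_{\ell'}$ then jumps from $i_1$ to a node $i_2$ that is itself red or level-$(\ell'-1)$-marked, and from there one descends the hierarchy with $O(\ell')$ further constant-time child-rank/child-select and rank steps, ending at the true answer stored directly in $P_1$; $\NRS$ is symmetric. Correctness rests on the invariant that every sibling skipped by a jump is blue, so the nearest red sibling on the relevant side is never lost.

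The step I expect to be the main obstacle is the bookkeeping in this last part: one has to bound the length of the initial sibling scan together with the $O(\ell')$ descent jumps and then choose the relation $\ell=\ell'+2$ so that this total is $O(\log^{(\ell)}n)$ while the $o(n)$ space bound holds for every positive integer $\ell$. Beyond that, the theorem is a routine assembly of the cited succinct building blocks and the lemmas already proved, and nothing needs to be stored other than $\BP(\cMin(A))$ and $c_{min}$.
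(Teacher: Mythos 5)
Your proposal is essentially the paper's own proof: store $\BP(\cMin(A))$ and $c_{min}$, bound the multi-level sampling arrays $M_1,\dots,M_{\ell'}$ and $P_1,\dots,P_{\ell'}$ by $O(n/\log^{(\ell'+1)}n)=o(n)$ bits, reduce part~(ii) to tree navigation plus $\colorr$, $\PRS$, $\NRS$ via Lemma~\ref{lem:qthoperation}, compute $\colorr$ from Lemmas~\ref{lem:valid} and~\ref{lem:c2dmin} together with $\rank_{110}$ on $\BP(\cMin(A))$, and answer $\PRS$/$\NRS$ by a left-sibling scan to the first red or level-$\ell'$-marked node followed by $O(\ell')$ telescoping jumps through the $P_i$ tables. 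The step you flag as the remaining obstacle (bounding the scan plus hops and setting $\ell=\ell'+2$) is precisely how the paper closes the argument, so the two proofs coincide in structure and ingredients.
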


Note that the total space usage of the data structure will be considered in Section~\ref{sec:combine:nonc} (Corollary~\ref{cor:nonconsecutive}). 

\subsection{Combining the encoding data structures on $\cMin{}(A)$ and $\cMax{}(A)$}\label{sec:combine:nonc}
We describe how to combine the data structure of Theorem~\ref{thm:cmin} on $\cMin{}(A)$ and $\cMax{}(A)$ using $3.585n+o(n)$ bits in total. We first briefly introduce the idea of Gawrychowski and Nicholson~\cite{DBLP:conf/icalp/GawrychowskiN15} to combine the DFUDS of $\Min{}(A)$ and $\Max{}(A)$. In DFUDS, any non-root node $i$ is represented as a bit array $0^{d_i}1$ where $d_i$ is the degree of $i$~\cite{DBLP:journals/algorithmica/BenoitDMRRR05}. We refer $0^{d_i}1$ to as the DFUDS of the node $i$.
The encoding of \cite{DBLP:conf/icalp/GawrychowskiN15} is composed of (i) a bit array $U[1, \dots, n]$, where $U[i]$ indicates which tree (either $\Min{}(A)$ or $\Max{}(A)$) has the node $i$ as an internal node, and (ii) a bit array $S = s_1s_2 \dots s_n$, where $s_i$ is the bit array obtained by omitting the first $0$ from the DFUDS of the node $i$ in the tree that $U[i]$ indicates.

To decode the DFUDS of the node $i$ in $\Min{}(A)$ or $\Max{}(A)$, first check whether the tree has the node $i$ as an internal node by referring to $U[i]$. If $i$ is an internal node, one can decode it by prepending $0$ to $s_i$. Otherwise, the decoded sequence is simply $1$ by Lemma~\ref{lem:2dmin}(b). Also, Gawrychowski and Nicholson~\cite{DBLP:conf/icalp/GawrychowskiN15} showed that $U$ and $S$ take at most $3n$ bits in total.
The following lemma shows that a similar idea can also be applied to combine $\BP{}(\cMin{}(A))$ and $\BP{}(\cMax{}(A))$ (the lemma can be proved directly from Lemma~\ref{lem:2dmin}).

\begin{lemma}\label{lem:bpminmax}
For any node $i \in \{1, 2, \dots, n-1\}$, if the node $i$ is an internal node in $\cMin{}(A)$, $f(i+1, \cMin{}(A)) = f(i, \cMin{}(A))+1$, and $f(i+1, \cMax{}(A)) = f(i, \cMax{}(A))+k$, for some $k > 1$. Otherwise, $f(i+1, \cMax{}(A)) = f(i, \cMax{}(A))+1$, and $f(i+1, \cMin{}(A)) = f(i, \cMin{}(A))+k$, for some $k > 1$.
\end{lemma}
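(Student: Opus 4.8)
The plan is to reduce the claim to a direct inspection of the preorder traversal that defines $\BP{}$, using Lemma~\ref{lem:2dmin}. The key starting observation is that in any 2d-heap the node $i$ is the $(i+1)$-th node in preorder, so the nodes $i$ and $i+1$ are consecutive in the preorder traversal; consequently, for $T \in \{\cMin{}(A), \cMax{}(A)\}$, the difference $f(i+1, T) - f(i, T)$ is exactly $1$ plus the number of parentheses $\BP{}(T)$ receives between the moment node $i$ is first visited and the moment node $i+1$ is first visited.

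First I would analyze a single tree $T$ and split on whether $T$ is the relevant tree of node $i$. If it is (so node $i$ is internal), Lemma~\ref{lem:2dmin}(a) tells us that node $i+1$ is the leftmost child of node $i$; hence the traversal opens node $i+1$ immediately after opening node $i$, nothing is written in between, and $f(i+1, T) = f(i, T)+1$. If it is not, node $i$ is a leaf of $T$, so the $0$ at position $f(i, T)$ is immediately followed by its matching $1$, possibly followed by further $1$'s that close ancestors of node $i$ not containing node $i+1$, before the $0$ opening node $i+1$ is written; at least one $1$ appears in between, so $f(i+1, T) = f(i, T)+k$ for some integer $k>1$.

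Next I would combine this with Lemma~\ref{lem:2dmin}(b). Since $A$ has no two consecutive equal elements, applying part~(b) in both directions (swapping the roles of $\cMin{}(A)$ and $\cMax{}(A)$) shows that for every $i \in \{1,\dots,n-1\}$ node $i$ is a leaf in exactly one of $\cMin{}(A)$ and $\cMax{}(A)$ — it cannot be a leaf in both (part~(b) would then force it to be internal in $\cMin{}(A)$) nor internal in both. So exactly one of the two trees is the relevant tree of node $i$, and plugging the two subcases of the previous paragraph into this dichotomy gives precisely the two displayed alternatives of the lemma.

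This argument is essentially bookkeeping on a traversal, so I do not expect a genuine obstacle; the only step that needs a little care is the ``exactly one relevant tree'' dichotomy, which is not a single instance of Lemma~\ref{lem:2dmin}(b) but requires its two-sided use together with the remark that node $i$ can be neither a leaf of both trees nor internal in both.
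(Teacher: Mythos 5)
Your proof is correct and takes essentially the same approach as the paper, which simply asserts that the lemma ``can be proved directly from Lemma~\ref{lem:2dmin}'' without supplying the details; you have spelled out exactly the bookkeeping the paper leaves implicit, including the two-sided use of Lemma~\ref{lem:2dmin}(b) to get the exclusive dichotomy between the two trees.
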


\begin{figure}
\begin{center}
\includegraphics[scale=0.7]{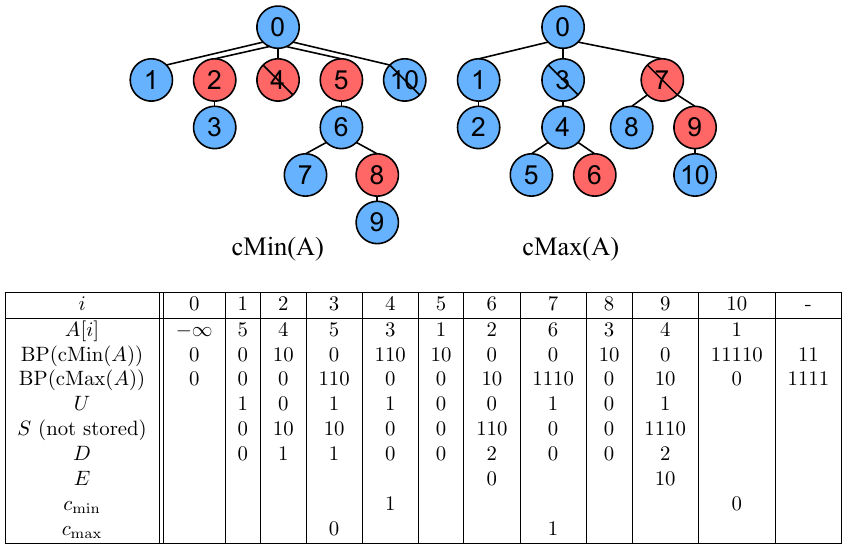}
\end{center}
\caption{Combined data structure of $\cMin{}(A)$ and $\cMax{}(A)$. $i$-th column of the table shows (i) the substring of $\BP{}(\cMin{}(A))$ and $\BP{}(\cMax{}(A))$ begin at position $f(i-1)+1$ and end at position $f(i)$ (shown in the second and the third row, respectively), and (ii) $s_i$ for each $i$ (shown in the fifth row).}
\label{fig:minmaxex1}
\end{figure}

We now describe our combined data structure of $\cMin{}(A)$ and $\cMax{}(A)$. We first maintain the following structures to store $\BP{}(\cMin{}(A))$ and $\BP{}(\cMax{}(A))$:
\begin{enumerate}
    \item 
    A bit array $U[1, \dots, n-1]$ where $U[i] = 0$ (resp. $U[i] = 1$) if the node $i$ is an internal node in $\cMin{}(A)$ (resp. $\cMax{}(A)$). For example, in the Figure~\ref{fig:minmaxex1}, $U[6] = 0$ since the node $6$ is an internal node in $\cMin{}(A)$.
    \item  For each node $i \in \{1, 2, \dots, n-1\}$, suppose the node $i$ is a leaf node in $T \in \{\cMin{}(A), \cMax{}(A)\}$, and let $k_i$ be the number of ones between $f(i, T)$ and $f(i+1, T)$. Now let $S = s_1s_2 \dots s_{n-1}$ be a bit array, where $s_i$  is defined as $1^{k_i-1}0$. For example, in the Figure~\ref{fig:minmaxex1}, since there exist three $1$'s between $f(6)$ and $f(7)$ in $\cMax{}(A)$, $s_6 = 110$. Then, $S$ is well-defined by Lemma~\ref{lem:bpminmax} ($k_i \ge 1$ for all $i$). Also, since there are at most $n-1$ ones and exactly $n-1$ zeros by Lemma~\ref{lem:2dmin}(b), the size of $S$ is at most $2(n-1)$. We maintain $S$ using the following two arrays:
    \begin{enumerate}
        \item An array $D[1, \dots n-1]$ of size $n$ where $D[i] = 0$ if $s_i$ contains no ones, $D[i]=1$ if $s_i$ contains a single one, and $D[i] = 2$ otherwise. For example, in the Figure~\ref{fig:minmaxex1}, $D[6] = 2$, since $s_6$ has two ones. We maintain $D$ using the data structure of Dodis et al.~\cite{DBLP:conf/stoc/DodisPT10}, which can decode any $\Theta(\log n)$ consecutive elements of $D$ in $O(1)$ time using $\ceil{(n-1) \log 3}$ bits. Now let $d_1$ and $d_2$ be the number of $1$'s and $2$'s in $D$, respectively.
        \item Let $i_2$ be the position of the $i$-th $2$ in $D$. Then, we store a bit array $E = e_1e_2, \dots, e_{\ell}$ where $e_i$ is a bit array defined by omitting the first two $1$'s from $s_{i_2}$. For example,in the Figure~\ref{fig:minmaxex1}, since the $6$ is the first position of $D$ whose value is $2$ and $s_6 = 110$, $e_1$ is defined as $0$. The size of $E$ is at most $2(n-1)-(n-1)-(d_1+d_2) = n -d_1-d_2$.
    \end{enumerate}
    \item Store both $f(n, \cMin{}(A))$, and $f(n, \cMax{}(A))$ using $O(\log n)$ bits.
\end{enumerate}
For storing both $c_{min}$ and $c_{max}$, we simply concatenate them into a single array $c_{minmax}$, and store the length of $c_{min}$ using $O(\log n)$ bits. Then, by Lemma~\ref{lem:valid}, the size of $c_{minmax}$ is $d_1+d_2$. 
Thus, our encoding of $\cMin{}(A)$ and $\cMax{}(A)$ takes at most $(n-1) + (n-1) \log 3 + (n-d_1-d_2) + (d_1+d_2) +O(\log n) = (2 + \log 3)n + O(\log n) < 3.585n + O(\log n)$ bits in total~\cite{DBLP:journals/ipl/Tsur19}. An overall example of our encoding is shown in Figure~\ref{fig:minmaxex1}. Now we prove the main theorem in this section.

\begin{theorem}\label{thm:nonconsecutive}
Given an array $A[1, \dots, n]$ of size $n$ without any consecutive equal elements, there exists a $(3.585n+o(n))$-bit encoding data structure which can answer (i) range minimum/maximum and previous larger/smaller value queries in $O(1)$ time, and (ii) range $q$-th minimum/maximum and next larger/smaller value queries in $O(\log^{(\ell)} n)$ time, for any $q \ge 1$ and positive constant integer $\ell$.
\begin{proof}
We show how to decode any $\log n$ consecutive bits of $\BP(\cMin{}(A))$ in $O(1)$ time, which proves the theorem by Lemma~\ref{lem:rrr} and Theorem~\ref{thm:cmin}. 
Note that the auxiliary structures and the procedure for decoding $\BP(\cMax{}(A))$ are analogous. 
Let $B[1, \dots, f(n)-1]$ be a subarray of $\BP(\cMin{}(A))$ of size $f(n)-1$, which is defined as $\BP(\cMin{}(A))[2, \dots, f(n)]$. Then it is enough to show how to decode $\log n$ consecutive bits of $B$ in $O(1)$ time using $o(n)$-bit auxiliary structures (note that $\BP(\cMin{}(A))$ is $0 \cdot B \cdot 1^{2n+2-f(n)}$). We also denote $f(n)-1$ by $f'(n)$ in this proof.
For each position $j \in \{1, \dots, f'(n)\}$ of $B$, we first define a corresponding position of $j$ in $U$, $D$, and $E$, denoted as $\alpha(j)$, $\beta(j)$, and $\gamma(j)$ respectively.
Briefly, $\alpha(j)$, $\beta(j)$, and $\gamma(j)$ denote the leftmost positions in $U$, $D$, and $E$, respectively, that are needed to decode the suffix $B[j, \dots, f(n)-1]$. 
We define $\alpha(j)$, $\beta(j)$, and $\gamma(j)$ more precisely as follows:

\begin{enumerate}
    \item $\alpha(j) = \rank{}_{0}(j, B)$
    \item $\beta(1) =  \gamma(1) = 1$
    \item For $j > 1$, $\beta(j) = \rank{}_{0}(j-1, B)$. 
    \item For $j > 1$, let $j_2$ be the number of $2$'s in $D[1, \dots, \beta(j)]$ and 
    $j_1$ be the number of $1$'s in $B$ between $B[j]$ and the leftmost $0$ to the right.
    Then $\gamma(j)$ is defined as (i) $1$ if $j_2$ is $0$, (ii) $\select{}_0(j_2, E)+1$ if $j_2 > 0$ and $D[\beta(j)] \neq 2$, and (iii) $\select{}_0(j_2, E)-\max{}(j_1-3, 0)$ otherwise. 
\end{enumerate}
In the rest of the proof, we describe how to decode a subarray of $B$ starting from the position $j$ from subarrays of $U$, $D$ and $E$ starting from the positions $\alpha(j)$, $\beta(j)$ and $\gamma(j)$, respectively.

For $i \in \{1, 2, \dots, \ceil{(f'(n))/\log n}\}$, we define the $i$-th block of $B$ as $B[\ceil{(i-1)\log n}+1, \dots, \min{}(\ceil{i \log n}, f'(n))]$. Then, it is enough to decode at most two consecutive blocks of $B$ to decode any $\log n$ consecutive bits of $B$. 
Next, we define the $i$-th block of $U$, $D$, and $E$ as follows:

\begin{itemize}
\item The $i$-th block of $U$ is defined as a subarray of $U$ starting at position $\alpha(\ceil{(i-1)\log n} + 1)$ and ending at position $\alpha(\min(\ceil{i \log n}, f'(n)))$. To decode the blocks of $U$ without $B$, we mark all the starting positions of the blocks of $U$ using a bit array $U_1$ of size $f'(n)$ where $U_1[i]= 1$ if and only if the position $i$ is the starting position of the block in $U$. Then, since $U_1$ contains at most $O(f'(n)/\log n) = o(n)$ $1$'s, we can store $U_1$ using $o(n)$ bits while supporting $\rank{}$ and $\select$ queries in $O(1)$ time by Lemma~\ref{lem:rrr}.

Also, to handle the case where two distinct blocks of $U$ share the same starting position, we define an additional bit array $U_2$ of size $\ceil{f'(n)/\log n}$, where $U_2[i] = 1$ if and only if the $i$-th block of $U$ has the same starting position as the $(i - 1)$-th block.
We store $U_2$ using the data structure from Lemma~\ref{lem:rrr}, which supports $\rank$ and $\select$ queries in $O(1)$ time using $o(n)$ bits. Then any block of $U$ can be decoded in $O(1)$ time using $\rank$ and $\select$ queries on $U_1$ and $U_2$, as each block has size at most $\log n$; this follows from the fact that every position in $U$ corresponds to at least one position in $B$.

\item The $i$-th block of $D$ is defined as the subarray of $D$ starting at position $\beta(\ceil{(i-1)\log n} + 1)$ and ending at position $\beta(\min(\ceil{i \log n}, f'(n)))$.
To decode the blocks, we maintain two bit arrays $D_1$ and $D_2$, analogous to $U_1$ and $U_2$, respectively, using a total of $o(n)$ bits.
With these, each block of $D$ can be decoded in $O(1)$ time using $\rank$ and $\select$ operations on $D_1$ and $D_2$ from the same argument as for decoding the the blocks of $U$.

\item The $i$-th block of $E$ is defined as a subarray of $E$ whose starting and ending positions are $\gamma(\ceil{(i-1)\log n}+1)$ and $\gamma(\min{}(\ceil{i \log n}, f'(n)))$, respectively. To decode the blocks of $E$, we maintain two bit arrays $E_1$ and $E_2$ analogous to $U_1$ and $U_2$, respectively, using $o(n)$ bits. Note that, unlike $U$ or $D$, the size of some blocks in $E$ can be arbitrarily large, since some positions in $E$ may not have corresponding positions in $B$—specifically when the runs of $1$ originate from $\BP{}(\cMax{}(A))$.
 
To handle this case, we classify each block of $E$ as \textit{bad block} and $\textit{good block}$, where the size of bad block is at least at $c \log n$ for some constant $c \ge 9$, whereas the size of good block is less than $c \log n$. If the $i$-th block of $E$ is good (resp. bad), we say it as $i$-th good (resp. bad) block. Then, each $i$-good block of $E$ can be decoded in $O(1)$ time by accessing at most $c \log n$ consecutive bits of $E$ starting from the leftmost position of the block.

Next, for each $i$-th bad block of $E$, let $F_i$ be a subsequence of the $i$-th bad block, which consists of all bits at the position $j$ where $\gamma^{-1}(j)$ exists.
We store $F_i$ explicitly using $\log n$ bits, which takes $\Theta(n)$ bits in total. 
However, we can apply the same argument used in \cite{DBLP:conf/icalp/GawrychowskiN15} as follows:
Briefly, the argument states that any two bad blocks of $E$—one defined from $\BP{}(\cMin{}(A))$ and the other from $\BP{}(\cMax{}(A))$—can share at most $4 \log n$ bits, since each position in $E$ corresponds to at least one position in either $\BP{}(\cMin{}(A))$ or $\BP{}(\cMax{}(A))$.
Therefore, for each $i$-th bad block of $E$ (defined from $\BP{}(\cMin{}(A))$), we can save at least $\log n$ bits by maintaining the remaining $(c - 4) \log n$ bits of $E$ as a compressed from (using Lemma~\ref{lem:rrr}), since it contains at most $\log n$ zeros.
As a result, we can maintain $F_i$ for all $i$-th bad blocks of $E$ without increasing the total space usage.
(see the proof of Theorem 1 in \cite{DBLP:conf/icalp/GawrychowskiN15} for a detailed argument).
\end{itemize}

Next, let $g(u, d, e, b)$ be a function, which returns a subarray of $B$ from the subarrays of $U$, and $D$, and $E$ as follows (assuming $u = u[1] \cdot u'$ and $d = d[1] \cdot d'$): 

\[
g(u, d, e, b)= 
\begin{cases}
\epsilon &\text{if}~u = \epsilon~\text{or}~d = \epsilon \\
0 \cdot g(u', d', e, b)& \text{if}~u[1] \neq b~\text{and}~d[1] \neq 2\\
0 \cdot g(u', d', e', b)& \text{if}~u[1] \neq b,~d[1]=2,~\text{and}~e = 1^t0 \cdot e'\\
10 \cdot g(u', d', e, b)      & \text{if}~u[1]=b~\text{and}~d[1]=0\\
110 \cdot g(u', d', e, b)      & \text{if}~u[1]=b~\text{and}~d[1]=1\\
1^{t+3}0 \cdot g(u', d', e', b)      & \text{if}~u[1]=b,~d[1]=2,~\text{and}~e = 1^t0 \cdot e'\\
\end{cases}
\]

We store a precomputed table that stores $g(u, d, e, b)$ for all possible $u$, $d$, and $e$ of sizes $\frac{1}{4}\log n$ and $b \in \{0, 1\}$ using $O(2^{\frac{1}{4} \log n + \frac{3}{2} \cdot \frac{1}{4} \log n + \frac{1}{4} \log n } \log n) = O(n^{\frac{7}{8}} \log n) = o(n)$ bits. Here, $b$ is fixed to $0$ (resp. $1$) when decoding a subarray of $\BP(\cMin{}(A))$ (resp. $\BP(\cMax{}(A))$). 
Finally, note that there are at most $q \le 4$ consecutive positions from $p$ to $p+q-1$ of $B$ whose corresponding positions are the same in both $D$ and $E$. Because such a case can only occur when $B[p] = B[p+1] = \dots = B[p+q-2] = 1$ and $B[p+q-1] = 0$, we maintain an array $R$ of size $O(n/\log n) = o(n)$, which stores the four cases of the number of consecutive $1$'s ($0$, $1$, $2$, or at least $3$) from the beginning of the $i$-th block of $B$

To decode the $i$-th block of $B$, we first decode the $i$-block of $U$ and $D$ in $O(1)$ time. Let these subarrays be $b_u$ and $b_d$, respectively. We define $b_e$ as $F_i$ if the $i$-th block of $E$ is bad; otherwise, we define $b_e$ as the $i$-th good block of $E$. In either case, $b_e$ can be decoded in $O(1)$ time. 
Next, we compute $g(b_u, b_d, b_e, 0)$ in $O(1)$ time by accessing the precomputed table $O(1)$ times, and prepend a $0$ if we are decoding the first block of $B$. Finally, if the number of consecutive $1$s at the beginning of $g(b_u, b_d, b_e, 0)$ is at most $3$, we remove some of these $1$s by referring to $R$.
\end{proof}
\end{theorem}

In order to only support the queries on $\cMin{}(A)$, the data structure described in Theorem~\ref{thm:nonconsecutive} can be modified by performing the following steps: (i) eliminating $U$, and (ii) replacing $S$ with $\BP{}(\cMin{}(A))$ (i.e., instead of $S$, representing $\BP{}(\cMin{}(A))$ using $D$ and $E$). 
As a result of these modifications, the sizes of $E$ and $c_{\min}$ become $n-1-d_1-d_2$ and $d_2$, respectively (recall that $d_1$ and $d_2$ are the number of $1$'s and $2$'s in $D$, respectively). Consequently, the total space usage is calculated as $(n-1)\log 3 + (n-1-d_1) + O(\log n) \le 2.585n-d_1 + O(\log n)$ bits.
We summarize the result in the following Corollary.

\begin{corollary}\label{cor:nonconsecutive}
Given an array $A[1, \dots, n]$ of size $n$ and any positive constant integer $\ell$, suppose $A$ has no two consecutive equal elements. Then there exists a $(2.585n-d_1+o(n))$-bit encoding data structure which can answer (i) range minimum and previous smaller value queries in $O(1)$ time, and (ii) range $q$-th minimum and next smaller value queries in $O(\log^{(\ell)} n)$ time, for any $q \ge 1$. Here, $d_1$ denotes the number of positions $i \in \{2, \dots, n\}$ in $A$ which satisfy $\PSV{}(i-1) = \PSV{}(i)$, i.e., the number of nodes in $\cMin{}(A)$ that have a leaf node as their immediate left sibling.
\end{corollary}


\section{Data structure on general arrays}\label{sec:equal}
In this section, we present a  $(3.701n + o(n))$-bit data structure to support the range $q$-th minimum/maximum and next/previous larger/smaller value queries on the array $A[1 \dots, n]$ without any restriction. Let $C[1, \dots, n]$ be a bit array of size $n$ where $C[1] = 0$, and for any $i >1$, $C[i]=1$ if and only if $A[i-1] = A[i]$. If $C$ has $k$ ones, we define an array $A'[1, \dots, n-k]$ of size $n-k$ that discards all consecutive equal elements from $A$. 
Then, by the definition of the colored 2d-min and 2d-max heaps, we observe that if $C[i] = 1$, then node $i$ is a blue-colored node in both $\cMin(A)$ and $\cMax(A)$, and its immediate left sibling is a leaf node in both trees.
Furthermore, by deleting all the bits at the positions $f(i, \cMin{}(A))-1$, and $f(i, \cMin{}(A))$ from $\BP{}(\cMin{}(A))$ we can obtain $\BP{}(\cMin{}(A'))$. We can also obtain $\BP{}(\cMax{}(A'))$ from $\BP{}(\cMax{}(A'))$ analogously. Now we prove the following theorem.

\begin{figure}
\begin{center}
\includegraphics[scale=0.7]{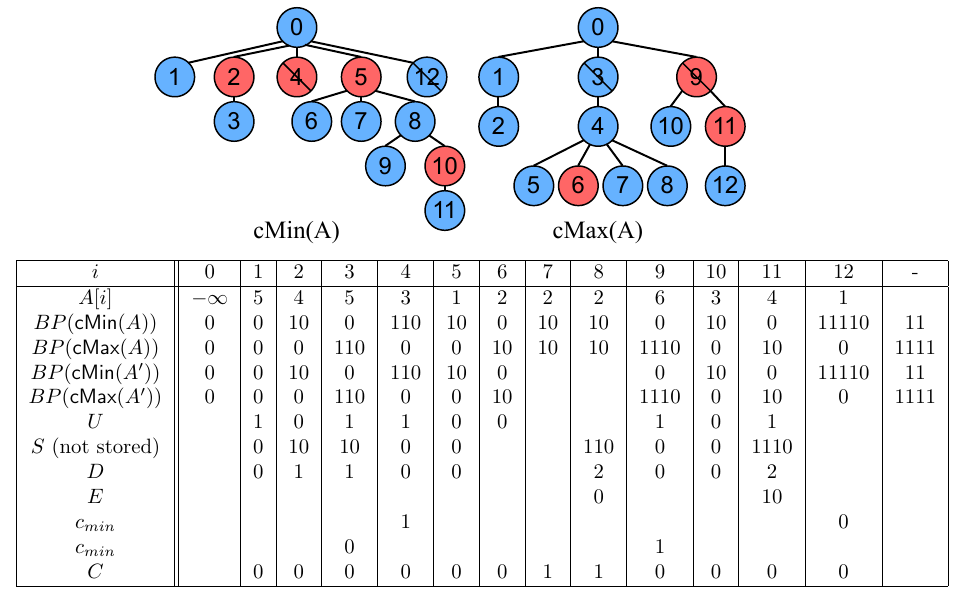}
\end{center}
\caption{Combined data structure of $\cMin{}(A)$ and $\cMax{}(A)$. Note that $A'$ is the same array as the array in Figure~\ref{fig:minmaxex1}.}
\label{fig:minmaxex2}
\end{figure}

\begin{theorem}\label{thm:general}
Given an array $A[1, \dots, n]$ of size $n$ and any positive constant integer $\ell$, there exists a $(3.701n+o(n))$-bit encoding data structure which can answer (i) range minimum/maximum and previous larger/smaller value queries in $O(1)$ time, and (ii) range $q$-th minimum/maximum and next larger/smaller value queries in $O(\log^{(\ell)} n)$ time, for any $q \ge 1$.
\begin{proof}
The data structure consists of $C$ and the data structure of Theorem~\ref{thm:nonconsecutive} on $A'$, which can answer all the queries on $A'$ in $O(\log^{(\ell)} n)$ time
(see Figure~\ref{fig:minmaxex2} for an example).
By maintaining $C$ using the data structure of Lemma~\ref{lem:rrr}, the data structure takes at most $(2+\log 3)(n-k) + {n \choose k} +o(n) \le 3.701n+o(n)$ bits in total~\cite{DBLP:journals/ipl/Tsur19} while supporting $\rank{}$ and $\select{}$ queries on $C$ in $O(1)$ time. For any node $i$ in $\cMin{}(A)$ and $\cMax{}(A)$, we can compute the color of the node $i$ in $O(1)$ time as follows. If $C[i] = 0$, we return the color of the node $(\rank{}_{0}(i, C)-1)$ in $\cMin{}(A')$ and $\cMax{}(A')$, respectively. Otherwise, we return blue. Now we describe how to decode any $\log n$ consecutive bits of $\BP(\cMin{}(A))$ in $O(1)$ time using $o(n)$-bit auxiliary structures, which proves the theorem (the auxiliary structures and the procedure for decoding $\BP(\cMax{}(A))$ are analogous). In the proof, we denote $\BP(\cMin{}(A))$ and $\BP(\cMin{}(A'))$ as $B_{A}$ and $B_{A'}$, respectively.

For each position $j$ of $B_{A}$, we say $j$ is $\textit{original}$ if $B_{A}[j]$ comes from the bit in $B_{A'}$, and $\textit{additional}$ otherwise. That is, the position $j$ is additional if and only if $j \in \{f(j')-1, f(j') \mid C[j'] =1\}$.
For each original position $j$, let $b'(j)$ be its corresponding position in $B_{A'}$.

Now we divide $B_{A}$ into the blocks of size $\log n$ except the last block, and let $s_i$ be the starting position of the $i$-th block of $B_{A}$. We then define a bit array $M_{B}$ of size $2(n-k)$ as follows. For each $i \in \{1, \dots, \ceil{(2(n+1)/\log n}\}$, we set the $b'(s_i)$-th position of $M_{B}$ as one if $s_i$ is original. Otherwise, we set the $b'(s'_i)$-th position of $M_{B}$ as one where $s'_i$ is the leftmost original position from $s_i$ to the right in $B_{A}$. All other bits in $M_{B}$ are $0$.
Also, let $M'_{B}$ a bit array of size $\ceil{(2(n+1)/\log n}$  where $M'_{B}[i]$ is $1$ if and only if we mark the same position for $s_i$ and $s_{i-1}$. Since $M_{B}$ has at most $\ceil{(2(n+1)/\log n} = o(n)$ ones, we can maintain both $M_{B}$ and $M'_{B}$ in $o(n)$ bits while supporting $\rank{}$ and $\select{}$ queries in $O(1)$ time by Lemma~\ref{lem:rrr}. Similarly, we define a bit array $M_{C}$ of size $n$ as follows. If $s_i$ is original, we set the $(\rank{}_{0}(s_i-1, B))$-th position of $M_{C}$ as one. Otherwise, we set the $(\rank{}_{0}(s_i-1, B))$-th (resp. $(\rank{}_{0}(s_i, B)$-th) position of $M_{C}$ as one if $B_{A}[s_i]$ is $0$ (resp. $1$). We also maintain a bit array $M'_{C}$ analogous to $M'_{B}$. Again, we can maintain both $M_{C}$ and $M'_{C}$ using $o(n)$ bits while supporting $\rank{}$ and $\select{}$ queries on them in $O(1)$ time.

Next, let $h(b, c)$ be a function, which returns a subarray of $B_{A}$ from the subarrays of $B_{A'}$ and $C$, defined as follows (assuming $c = c[1] \cdot c'$):

\[
h(b,c)=
\begin{cases}
1^t & \text{if}~b = 1^t~\text{and}~c[1] = 0\\
1^t0 \cdot h(b', c')& \text{if}~b = 1^t0 \cdot b'~\text{and}~c[1] = 0\\
10 \cdot h(b, c')& \text{if}~c[1] = 1\\
\end{cases}
\]

We store a precomputed table, which stores $h(b, c)$ for all possible $b$, $c$ of size $\frac{1}{4}\log n$ using $O(2^{\frac{1}{2}\log n} \log n) = O(\sqrt{n} \log n) = o(n)$ bits. Finally, we store a bit array $M_S$ of size $o(n)$, which indicates whether the first bit of the $i$-th block of $B_{A}$ is $0$ or not.

To decode the $i$-th block of $B_{A}$, we first decode $\log n$-sized subarrays of $B_{A'}$ and $C$, $b_{b'}$ and $b_c$, whose starting positions are $\select{}_{1}(\rank{}_{0}(i, M'_{B}), M_{B})$ and $\select{}_{1}(\rank{}_{0}(i, M'_{C}), M_{C})$, respectively. We then compute $h(b_{b'}, b_{c})$ in $O(1)$ time by referring to the precomputed table $O(1)$ times. As the final step, we remove the leftmost bit of $h(b_{b'}, b_{c})$ if the $i$-th block of $B_{A}$ starts from $0$, and $s_i$ is additional (this can be done by referring $M_S$).
\end{proof}
\end{theorem}

Similar to the data structure presented in Corollary~\ref{cor:nonconsecutive}, the data structure of Theorem~\ref{thm:general} can be modified to support queries only for $\cMin(A)$ when $A$ contains consecutive equal elements. We provide a summary of this result in the following Corollary.

\begin{corollary}\label{cor:general}
Given an array $A[1, \dots, n]$ of size $n$ and any positive constant integer $\ell$, let $A'$ be an array that discards all consecutive equal elements from $A$. 
Then there exists a $(2.808n-d_1+o(n))$-bit encoding data structure which can answer (i) range minimum and previous smaller value queries in $O(1)$ time, and (ii) range $q$-th minimum and next smaller value queries in $O(\log^{(\ell)} n)$ time, for any $q \ge 1$. Here, $d_1$ denotes the number of positions $i \in \{2, \dots, |A'|\}$ in $A'$ which satisfy $\PSV{}(i-1) = \PSV{}(i)$, i.e., the number of nodes in $\cMin{}(A')$ that have a leaf node as their immediate left sibling.
\begin{proof}
In order to only support the queries on $\cMin{}(A)$, 
we can combine the data structure of Corollary~\ref{cor:nonconsecutive} on $A'$ with $C$ using $2.585(n-k) + {n \choose k}- d_1 +o(n) \le 2.808n-d_1+o(n)$ bits~\cite{DBLP:journals/ipl/Tsur19}. The decoding procedure is the same as in the proof of Theorem~\ref{thm:general}. 
\end{proof}
\end{corollary}

Interestingly, the following theorem shows that even when $A$ contains consecutive equal elements, there exists a data structure of size at most $2.585n+o(n)$ bits that can support all the queries on $\cMin{}(A)$ with the same asymptotic query time as the data structure of Corollary~\ref{cor:general}. 

\begin{figure}
\begin{center}
\includegraphics[scale=0.8]{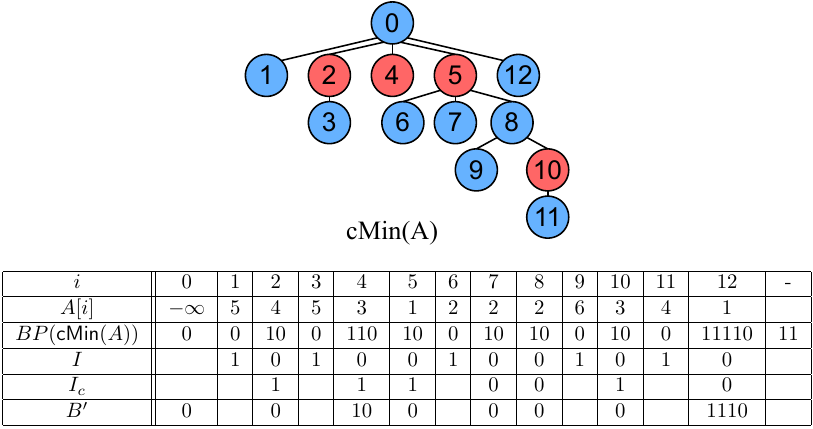}
\end{center}
\caption{An example of the data structure of Theorem~\ref{thm:cmin_second} on the array $A$, which is the same array as in Figure~\ref{fig:minmaxex2}.}
\label{fig:thm4ex}
\end{figure}

\begin{theorem}\label{thm:cmin_second}
Given an array $A[1, \dots, n]$ of size $n$, there exists a data structure using at most $(2.585n+o(n))$ bits of space that can answer (i) range minimum and previous smaller value queries in $O(1)$ time, and (ii) range $q$-th minimum and next smaller value queries for any $q \ge 1$ in $O(\log^{(\ell)}n)$ time for any positive constant integer $\ell$.
\begin{proof}
As in the proof of Theorem~\ref{thm:general}, we use the notation $B_A$ to denote $\BP(\cMin(A))$. Then it suffices to show that we can decode any $\Theta(\log n)$ consecutive bits of $B_{A}[1, \dots, f(n)]$ and $\colorr(i)$ in $O(1)$ time using a data structure of size at most $2.585n + o(n)$ bits. 
We maintain two bit arrays: $I$ of size $n$ and $I_c$ of size $n - m$, where $m$ denotes the number of leftmost children in $\cMin(A)$. 
These arrays are defined as follows: (i) for each $i \in {1, \dots, n}$, $I[i] = 1$ if and only if node $i$ is the leftmost child in $\cMin(A)$; and 
(ii) for each $i \in \{1, \dots, n - m\}$, $I_c[i] = 1$ if and only if the color of the node
$\select_0(i, I)$ is red.
As $I$ contains exactly $m$ ones, both $I$ and $I_c$ can be stored using at most $\log {{n \choose m}} + (n-m) +o(n) \le 1.585n + o(n)$ bits~\cite{DBLP:journals/ipl/Tsur19} using the data structure presented in Lemma~\ref{lem:rrr}, which supports $\rank$ and $\select$ queries on $I$ in $O(1)$ time. Using the data structures, we can compute $\colorr(i)$ in $O(1)$ time by returning red if and only if $I[i] = 0$ and $I_c[\rank_{0}(i, I)] = 1$. Note that the color of the leftmost children is always blue, even when $A$ has  consecutive equal elements.

Next, suppose $B_{A}$ is given as $b_0b_1b_2 \dots b_n$, where $b_0 = 0$ and each $b_i$ corresponds to the substring $B_{A}[f(i-1)+1, \dots, f(i)]$ for $i \in \{1, \dots, n\}$. We define a bit array $B' = b'_0b'_1 \dots b'_n$ as follows: (i) set $b'_0 = 0$; (ii) if $f(i-1)+1 = f(i)$ (i.e., node $i$ is the leftmost child of node $i-1$ in $\cMin(A)$), then $b'_i$ is an empty array; (iii) otherwise, $b'_i$ is obtained by removing the first bit of $b_i$.

From the construction, either $B_{A}[f(i)]$ or $B_{A}[f(i)-1]$ is removed from $B_{A}$ when constructing $B'$ for all $i \in \{1, \dots, n\}$. Thus, the size of $B'$ is at most $n+2$ bits. As a result, total size of $I$, $I_c$, and $B'$ is at most $2.585n + o(n)$ bits (see Figure~\ref{fig:thm4ex} for an example). 
Also for each position $p_b$ of $B_{A}$, we define its corresponding position $p_{b'}$ of $B'$ as follows:
\begin{itemize}
    \item $p_{b'}$ is $1$ if $p_b = 1$, and
    \item If $B_{A}[p_b]$ is removed during the construction of $B'$, $p_{b'}$ is the leftmost position to the right of $p_b$ where $B_{A}[p_{b'}]$ is not removed during the construction, and
    \item Otherwise, if $p_b$ is the $j$-th position of $b_i$ of $B_{A}$, $p_{b'}$ is the $j$-th (if $B_{A}[p_b] = 1$) or $(j-1)$-th position (if $B_{A}[p_b] = 0$) of $b'_i$ in $B'$.  
\end{itemize}

For $j \in \{1, \dots, \ceil{n/\log n}\}$, we define the $j$-th block of $B_{A}$ as a subarray $B_{A}[\ceil{(j-1)\log n}+1, \dots, \min{}(\ceil{j \log n}, n)]$. Now we describe how to decode the $j$-th block of $B_{A}$ in $O(1)$ time using $I$ and $B'$, which prove the theorem. Let $s_j = \ceil{(h-1)\log n}+1$ be the starting position of the $j$-th block of $B_{A}$. We then maintain the following bit arrays:

\begin{enumerate}
\item $N_B$: a bit-array of size $|B'|$ where $N_B[j]$ = 1 if and only if $B'[j]$ corresponds to the starting position of any block in $B_{A}$.
\item $N'_B$: a bit-array of size $\ceil{n/\log n}$ where $N'_B[j]$ = 1 if and only if $j > 1$ and both the positions $s_{j}$ and $s_{j-1}$ of $B_{A}$ have the same corresponding positions in $B'$.
\item $N_I$: a bit-array of size $n$ constructed as follows. If $B_{A}[s_j]=0$ (resp. $1$), then $N_I[\text{rank}_0(s_j, B)]$ (resp. $N_I[\text{rank}_0(s_j, B)+1]$) is $1$. The remaining bits in $N_I$ are set to $0$.  
\item $N'_I$: a bit-array of size $\ceil{n/\log n}$ where $N'_B[j]$ = 1 if and only if $j > 1$, and $B_{A}[s_j]  = B[s_{j-1}]$, and $\rank_{0}(s_j, B) = \rank_{0}(s_{j-1}, B)$.
\item $B_S$: a bit array of size $\ceil{n/\log n}$ where $B_S[j]$ is $B_{A}[s_j]$. 
\end{enumerate}

Since both $N_B$ and $N_I$ contain at most $\ceil{n/\log n} = o(n)$ ones, it is possible to store all the bit-arrays mentioned above using a total of $o(n)$ bits while supporting $O(1)$-time $\text{rank}$ and $\text{select}$ queries. This can be achieved by using the data structure presented in Lemma~\ref{lem:rrr}. 
In addition, let $g_2(b', I_b)$ be a function that returns a subarray of $B_{A}$ from the two subarrays $b'$ and $I_b$ of $B'$ and $I$, respectively as follows (we assume $b'=b'[1]\cdot b''$ and $I_b=I_b[1]\cdot I'_b$):

\[
g_2(b', I_b)=
\begin{cases}
0 \cdot g_2(b', I'_b)& \text{if}~I_b[1] = 1\\
1 \cdot g_2(b'', I_b)& \text{if}~I_b[1] = 0~\text{and}~b'[1] = 1\\
10 \cdot g_2(b'', I'_b)& \text{if}~I'_b[1] = 0~\text{and}~b'[1] = 0\\
\end{cases}
\]

We store a precomputed table, which stores $f(b', I_b)$ for all possible $b'$ and $I_b$ of size $\frac{1}{4}\log n$ using $O(2^{\frac{1}{2}\log n} \log n) = O(\sqrt{n} \log n) = o(n)$ bits. Consequently, the total size of auxiliary structures is $o(n)$ bits. 

In order to decode the $j$-th block of $B_{A}$, we follow these steps: Firstly, we decode a substring $sub_b$ of $B'$ with a length of $\ceil{\log n}$, starting from the position $\select_{1}(\rank_{0}(j, N'_B), N_B)$. We also decode a substring $sub_I$ of $I$, with a length of $\ceil{\log n}$, starting from the position $\select_{1}(\rank_{0}(j, N'_I), N_I)$. Next, we compute a prefix of $g_2(sub_b, sub_I)$ that has a size of $\ceil{\log n}$ in $O(1)$ time using the precomputed table. Finally, we remove the first $1$ from the prefix if the first bit of $g_2(sub_b, sub_I)$ is $1$ and $B_S[j]$ is $0$.
\end{proof}
\end{theorem}

\begin{figure}[ht]
    \begin{center}
        \includegraphics[scale=0.28]{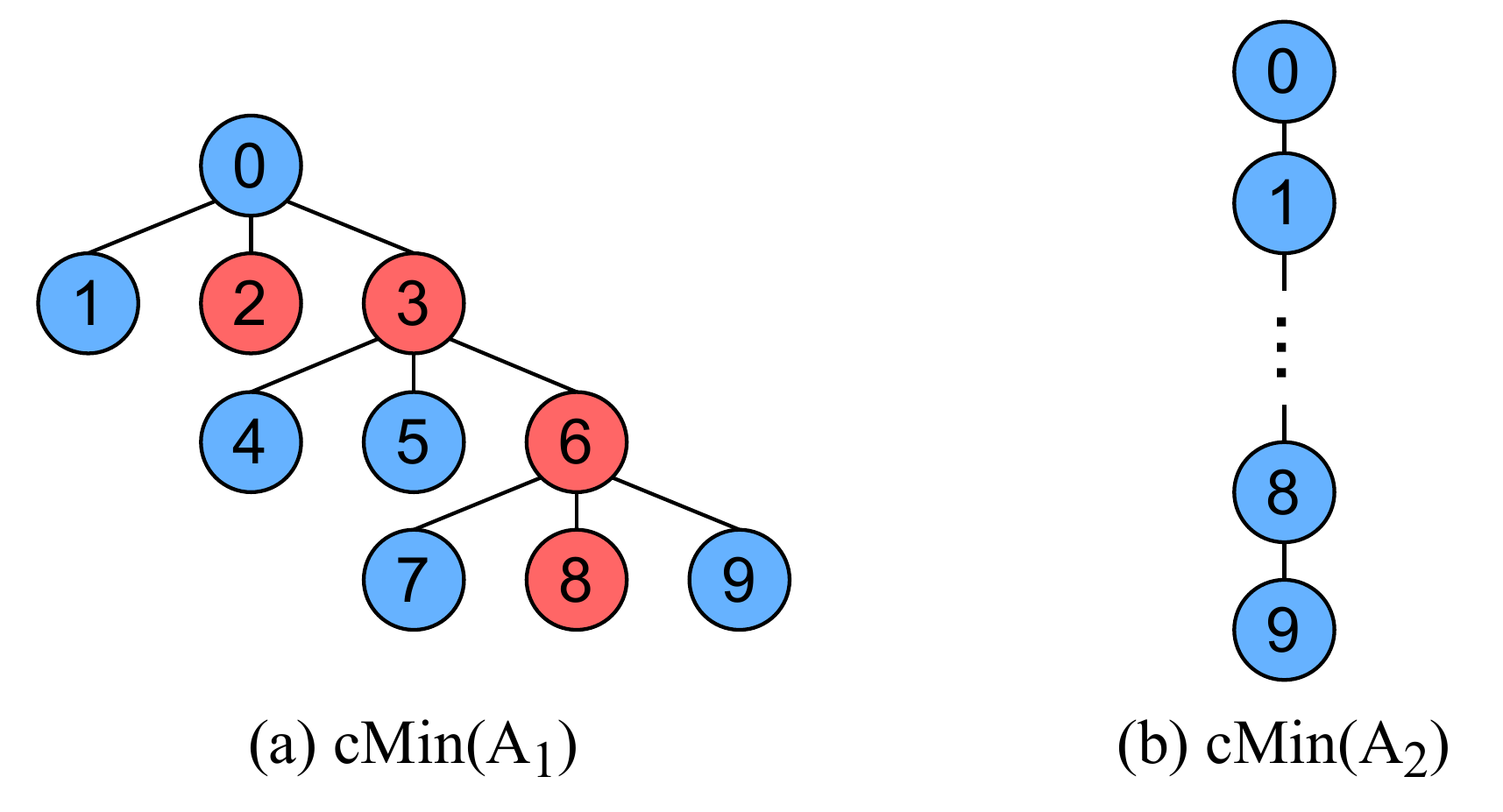}
    \end{center}
    \caption{$\cMin{}(A_1)$ and $\cMin{}(A_2)$ on the array $A_1 =~3~2~1~5~5~4~7~6~6$ and $A_2 =~1~2~3~4~5~6~7~8~9$. 
    }
    \label{fig:example1}
\end{figure}
\begin{remark}
Comparing the data structures presented in Corollary~\ref{cor:general} and Theorem~\ref{thm:cmin_second}, we observe that the latter requires less space in the worst-case. However, if the array contains a long decreasing sequence, resulting in a large value of $d_1$, the data structure of Corollary~\ref{cor:general} may use less space than that of Theorem~\ref{thm:cmin_second}. 

For example, consider the array $A_1$ in Figure~\ref{fig:example1}(a) of size $n = 9$. In this case, $d_1 = 2n/3$, and $\cMin(A_1)$ contains $n/3$ leftmost children. Hence, the data structure from Theorem~\ref{thm:cmin_second} uses $\log {{n} \choose {n/3}} + 2n/3 + n + o(n) \le 2.585n + o(n)$ bits~\cite{DBLP:journals/ipl/Tsur19}, while the data structure from Corollary~\ref{cor:general} uses $2.808n - \frac{2}{3}n + o(n) \le 2.142n + o(n)$ bits.

On the other hand, if the input array $A_2$ of size $n$ is strictly increasing, then $\PSV(i-1) \neq \PSV(i)$ holds for all $i \in \{2, \dots, n\}$ (implying $d_1 = 0$), and $\cMin{}(A_2)$ consists of $n$ leftmost children (see Figure~\ref{fig:example1}(b)). In this case, the data structure from Corollary~\ref{cor:nonconsecutive} (i.e., Corollary~\ref{cor:general} for arrays without consecutive equal elements) uses $2.585n + o(n)$ bits, whereas the data structure from Theorem~\ref{thm:cmin_second} uses only $n + o(n)$ bits. 

Note that these examples also show that, for certain input instances, the data structures of Corollary~\ref{cor:general} (Corollary~\ref{cor:nonconsecutive} for arrays without consecutive equal elements) and Theorem~\ref{thm:cmin_second} can use less space than Fischer’s worst-case succinct data structure, which uses $2.54n + o(n)$ bits ($2n + o(n)$ bits for arrays without consecutive equal elements)~\cite{DBLP:journals/tcs/Fischer11}.
\end{remark}

\section{Lower bounds}\label{sec:lower}
This section considers the effective entropy to answer range $q$-th minimum and maximum queries on an array of size $n$, for any $q \ge 1$. Note that for any $i \in \{1, \dots, n\}$, both $\PSV{}(i)$ and $\PLV{}(i)$ queries can be answered by computing range $q$-th minimum and maximum queries on the suffixes of the substring $A[1, \dots, i]$, respectively. Similarly, both $\NSV{}(i)$ and $\NLV{}(i)$ queries can be answered by computing range $q$-th minimum and maximum queries on the prefixes of the substring $A[i, \dots, n]$, respectively.

Let $\mathcal{A}_n$ be a set of all arrays of size $n \ge 2$ constructed from the following procedure:

\begin{enumerate}
    \item For any $0 \le k \le n - 1$, choose any $k$ positions from the set $\{2, \dots, n\}$, and let $X$ be a set of the remaining $n - k$ unselected positions.
    We then construct a \textit{Baxter permutation}~\cite{10.2307/2034894} $\pi_{n-k} : X \rightarrow \{1, \dots, n - k\}$, which is a permutation of size $n - k$ over the positions in $X$.
    
    A Baxter permutation is one that avoids any three indices $j_1 < j_2 < j_3$ satisfying either $\pi_{n-k}(j_2+1) < \pi_{n-k}(j_1) < \pi_{n-k}(j_3) < \pi_{n-k}(j_2)$ or $\pi_{n-k}(j_2) < \pi_{n-k}(j_3) < \pi_{n-k}(j_1) < \pi_{n-k}(j_2+1)$—that is, $\pi_{n-k}$ avoids the patterns $2-41-3$ and $3-14-2$. For example, $\pi_{n-k} = 3~5~6~1~4~2$ is not a Baxter permutation since it contains the $2-41-3$ pattern: $\pi_{n-k}(4) < \pi_{n-k}(1) < \pi_{n-k}(5)  < \pi_{n-k}(3)$. 
    
    \item For each of the $k$ selected positions $j$, assign the value $\pi_{n-k}(j')$, where $j'$ is the immediate predecessor of $j$ in $X$. This is well-defined since $X$ always includes the leftmost position in the array. Moreover, the construction ensures that equal elements in the array appear in consecutive positions.
\end{enumerate}

Since the number of all possible Baxter permutations of size $n-k$ is at most $2^{3(n-k)-\Theta(\log n)}$~\cite{DBLP:conf/icalp/GawrychowskiN15}, the effective entropy of $\mathcal{A}_n$ is at least $\log |\mathcal{A}_n| \ge \log (\sum_{k=0}^{n-1} 2^{3(n-k)-\Theta(\log n)} \cdot {n-1 \choose k} ) \ge \max{}_{k}(3n-3k+\log{n \choose k} - \Theta(\log n)) \ge n\log 9 -\Theta(\log n) \ge 3.16n - \Theta(\log n)$ bits~\cite{DBLP:journals/ipl/Tsur19}. The following theorem shows that the effective entropy of the encoding to support the range $q$-th minimum and maximum queries on an array of size $n$ is at least $3.16n-\Theta(\log n)$ bits.

\begin{theorem}\label{thm:lb}
Any array $A$ in $\mathcal{A}_n$ for $n \ge 2$ can be reconstructed using range $q$-th minimum and maximum queries on $A$. 
\begin{proof}
We follow the same argument used in the proof of Lemma 3 in~\cite{DBLP:conf/icalp/GawrychowskiN15}, which shows that one can reconstruct any Baxter permutation of size $n$ using range minimum and maximum queries.

The proof is induction on $n$. the case $n = 2$ is trivial since only the possible cases are $\{1, 1\}$ or $\{1, 2 \}$, which can be decoded by range first and second minimum queries.
Now suppose the theorem statement holds for any size less than $n \ge 3$. Then, both $A_1 = A[1, \dots, n-1]$ and $A_2 = A[2, \dots, n]$ from $\mathcal{A}_{n-1}$ can be reconstructed by the induction hypothesis. Thus, to reconstruct $A$ from $A_1$ and $A_2$, it is enough to compare $A[1]$ and $A[n]$.

If the answer to either $\RMax(1, n, q)$ or $\RMin(1, n, q)$ includes position $1$ or $n$, then we are done. This also covers the case where $A[1] = A[n]$, since equal elements in $A$ always appear in consecutive positions.

Otherwise, let $x$ and $y$ be the rightmost positions of the smallest and largest elements in $A$, respectively, which can be determined using $\RMin(1, n, q)$ and $\RMax(1, n, q)$. Note that in this case $x, y \in \{2, \dots, n - 1\}$ and $x \neq y$. Thus, $A[1]$, $A[x]$, $A[y]$, and $A[n]$ must all have distinct values. Without loss of generality, assume $x < y$ (the other case is symmetric). 

Since $A[1]$, $A[x]$, $A[y]$, and $A[n]$ are all distinct, and any subsequence of $A$ with distinct values forms a Baxter permutation, we can apply the same argument as in Lemma 3 of~\cite{DBLP:conf/icalp/GawrychowskiN15}, which shows that either (i) there exists a position $i \in [x, y]$ such that $A[1] < A[i] < A[n]$ or $A[1] > A[i] > A[n]$, or (ii) $A[1] < A[n]$, thereby proving the theorem. 
\end{proof}
\end{theorem}

\section{Conclusion}
This paper proposes an encoding data structure that efficiently supports range ($q$-th) minimum/maximum queries and next/previous larger/smaller value queries. Our results match the current best upper bound of Tsur~\cite{DBLP:journals/ipl/Tsur19} up to lower-order additive terms while supporting the queries efficiently.

Note that the lower bound of Theorem~\ref{thm:lb} only considers the case that the same elements always appear consecutively, which still gives a gap between the upper and lower bound of the space. Improving the lower bound of the space for answering the queries would be an interesting open problem. 
In addition, to the best of our knowledge, there is currently no practical implementation of data structures that support queries based on both (colored) 2D-min and 2D-max heaps simultaneously. Implementing and optimizing the theoretical results for these problems is another promising direction for future work.



\bibliography{ref}



\end{document}